\newsavebox\myboxA
\newsavebox\myboxB
\newlength\mylenA
\newcommand*\xoverline[2][0.75]{%
    \sbox{\myboxA}{$\m@th#2$}%
    \setbox\myboxB\null
    \ht\myboxB=\ht\myboxA%
    \dp\myboxB=\dp\myboxA%
    \wd\myboxB=#1\wd\myboxA
    \sbox\myboxB{$\m@th\overline{\copy\myboxB}$}
    \setlength\mylenA{\the\wd\myboxA}
    \addtolength\mylenA{-\the\wd\myboxB}%
    \ifdim\wd\myboxB<\wd\myboxA%
       \rlap{\hskip 0.5\mylenA\usebox\myboxB}{\usebox\myboxA}%
    \else
        \hskip -0.5\mylenA\rlap{\usebox\myboxA}{\hskip 0.5\mylenA\usebox\myboxB}%
    \fi}
\newcommand{\spara}[1]{\smallskip\noindent{\bf #1}}
\newtheorem{problem}{Problem}
\newcommand{\bigO}{\ensuremath{\mathcal{O}}\xspace}
\newcommand{\bigTheta}{\ensuremath{{\Theta}}\xspace}
\newcommand{\bigOmega}{\ensuremath{{\Omega}}\xspace}
\newcommand{\cc}{\ensuremath{\mathit{cc}}}
\newcommand{\ccad}{\ensuremath{\xoverline{\mathit{cc}}}}
\newcommand{\NP}{$\mathbf{NP}$\xspace}
\newcommand{\NPhard}{$\mathbf{NP}$-hard\xspace}
\newcommand{\eigensign}{{\sc Eigen\-sign}\xspace}
\newcommand{\randalgo}{{\sc Ran\-dom-Eigen\-sign}\xspace}
\newcommand{\simplealgo}{{\sc Pick-an-edge}\xspace}
\newcommand{\wma}{{\ensuremath W_{\mathit{2CC}} }\xspace}
\newcommand{\wsqp}{{\ensuremath W_{\mathit{HPC}}}\xspace}
\newcommand{\twocorrelationclustering}{{\sc 2-Correlation-Clustering}\xspace}
\newcommand{\twocc}{{\sc 2CC}\xspace}
\newcommand{\twoccfull}{{\sc 2CC-Full}\xspace}
\newcommand{\ourproblem}{{\sc 2PC}\xspace}
\newcommand{\ourproblemfull}{{\sc 2-Polarized-Communities}\xspace}
\newcommand{\score}{{polarity}\xspace}
\renewcommand{\vec}[1]{\mathbf{#1}}
\newcommand{\squishlist}{
 \begin{list}{$\bullet$}
  {  \setlength{\itemsep}{0pt}
     \setlength{\parsep}{3pt}
     \setlength{\topsep}{3pt}
     \setlength{\partopsep}{0pt}
     \setlength{\leftmargin}{2em}
     \setlength{\labelwidth}{1.5em}
     \setlength{\labelsep}{0.5em}
} }
\newcommand{\squishlisttight}{
 \begin{list}{$\bullet$}
  { \setlength{\itemsep}{0pt}
    \setlength{\parsep}{0pt}
    \setlength{\topsep}{0pt}
    \setlength{\partopsep}{0pt}
    \setlength{\leftmargin}{2em}
    \setlength{\labelwidth}{1.5em}
    \setlength{\labelsep}{0.5em}
} }
\newcommand{\squishdesc}{
 \begin{list}{}
  {  \setlength{\itemsep}{0pt}
     \setlength{\parsep}{3pt}
     \setlength{\topsep}{3pt}
     \setlength{\partopsep}{0pt}
     \setlength{\leftmargin}{1em}
     \setlength{\labelwidth}{1.5em}
     \setlength{\labelsep}{0.5em}
} }
\newcommand{\squishend}{
  \end{list}
}
\begin{document}
\fancyhead{}
\title{Discovering Polarized Communities in Signed Networks}

\author{Francesco Bonchi}
\email{francesco.bonchi@isi.it}
\orcid{1234-5678-9012}
\affiliation{%
  \institution{ISI Foundation, Italy}
  \institution{Eurecat, Barcelona, Spain}
}

\author{Edoardo Galimberti}
\email{edoardo.galimberti@unito.it}
\affiliation{%
  \institution{Dept. of Computer Science., University of Turin, Italy}
  \institution{ISI Foundation, Italy}
}

\author{Aristides Gionis, Bruno Ordozgoiti}
\email{{name.surname}@aalto.fi}
\affiliation{%
  \institution{Dept. of Computer Science, Aalto University, Finland}
}

\author{Giancarlo Ruffo}
\email{giancarlo.ruffo@unito.it}
\affiliation{%
  \institution{Dept. of Computer Science., University of Turin, Italy}
}

\renewcommand{\shortauthors}{Bonchi, et al.}

\begin{abstract}
Signed networks contain edge annotations to indicate whether each interaction is friendly (positive edge) or antagonistic (negative edge). The model is simple but powerful and 
it can capture novel and interesting structural properties of real-world phenomena.
The analysis of signed networks has many applications from modeling discussions in social media, to mining user reviews, and to recommending products in e-commerce sites.

In this paper we consider the problem of discovering polarized communities in signed networks.
In particular, we search for two communities (subsets of the network vertices) where within communities there are mostly positive edges while across communities there are mostly negative edges.
%
We formulate this novel problem as a ``discrete eigenvector'' problem, which we show to be \NP-hard. We then develop two intuitive spectral algorithms: one deterministic, and one randomized with quality guarantee $\sqrt{n}$ (where $n$ is the number of vertices in the graph), tight up to constant factors.

We validate our algorithms against non-trivial baselines on real-world signed networks.
Our experiments confirm that our algorithms produce higher quality solutions, are much faster and can scale to much larger networks than the baselines, and are able to detect ground-truth polarized communities.

\end{abstract}

\maketitle
\sloppy

\section{Introduction}
\label{sec:intro}
The increase of polarization around controversial issues is a growing concern
with important societal fallouts.
While controversy can be engaging, and can lead to users spending more time on social-media platforms,
in disproportionate amounts it can generate a negative user experience,
potentially leading to the abandonment of the platform.
Excessive polarization, together with the emergence of bots and the spread of misinformation, has thus become an urgent technological problem that needs to be solved.
It is not surprising that the last few years have witnessed an uptake of the research on methods for the detection and suppression of these phenomena \cite{garimella2017reducing,liao2014can,liao2014expert,vydiswaran2015overcoming,munson2013encouraging,graells2014people}.


\begin{figure}[t]
\centerline{
\includegraphics[width=.95\columnwidth]{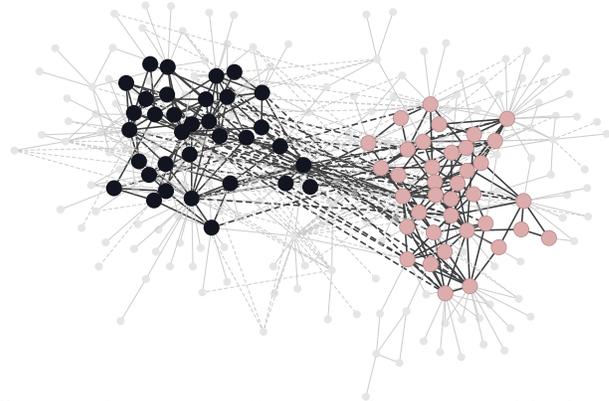}
}
\vspace{-5mm}
\caption{\label{fig:example} An example of two polarized communities in the \textsf{Congress} network (dataset details in Section \ref{sec:experiments}).
Solid edges are positive, while dashed edges are negative.
}
\end{figure}

While polarization is a well studied phenomenon in political and social sciences~\cite{baldassarri2007dynamics,brundidge2010encountering,esteban1994measurement,feldman2014mutual,garrett2014partisan,wojcieszak2009online},
modern social-media platforms brought it to a different scale,
providing an unprecedented wealth of data.
The necessity to analyze the available data and gain valuable insights brings new algorithmic challenges.

In order to study polarization in large-scale online data,
one first step is to \emph{detect} it. As a step in this direction, in this paper we study a fundamental problem abstraction for this task,
i.e., the problem of \emph{discovering polarized communities in signed networks}.

A signed network is a simple, yet general and powerful, representation:
vertices represent entities and edges between vertices represent interactions,
which can be friendly (positive) or antagonistic (negative)~\cite{harary1953notion}.
Signed graphs analysis has many applications
from modeling interactions in social media~\cite{kunegis2009slashdot},
to mining user reviews~\cite{beigi2016signed},
to studying information diffusion and epidemics~\cite{li2013influence},
to recommending products in e-commerce sites~\cite{ma2009learning,victor2011trust}, and to estimating the structural balance of a (physical) complex system~\cite{antal2006,marvel2009}.

In this paper, we introduce the \ourproblemfull\ problem (\ourproblem),
which requires finding two communities (subsets of the network vertices)
such that within communities there are mostly positive edges
while across communities there are mostly negative edges.
Furthermore, we do not aim to partition the whole network,
so the two polarized communities we are searching can be concealed
within a large body of other network vertices,
which are neutral with respect to the polarized structure.
Our hypothesis is that such 2-community polarized structure accurately
captures controversial discussions in real-world social-media platforms.

Figure \ref{fig:example} shows an example of the two most polarized communities found in the \textsf{Congress} network (details in Section \ref{sec:experiments}). The two communities involve $34$ and $37$ vertices (out of $219$), respectively, having more than 98\% of positive edges within and 78\% negative edges across.
The vertices in gray do not participate in any of the two polarized communities: either they have too few connections with any community, or the polarity of their relations are mixed and thus their position within the debate unclear.

Our work is, to the best of our knowledge, the first to propose a spectral method for extracting polarized communities from signed networks. In addition, we present hardness results and approximation guarantees.
Our problem formulation deviates from the bulk of the literature where methods
typically look for finding many communities while partitioning the
whole network~\cite{anchuri2012communities,bansal2004correlation,chiang2012scalable,coleman2008local,giotis2006correlation,kunegis2010spectral}.
As discussed in more detail in Sections \ref{sec:related} and \ref{sec:problem}, the closest to our problem statement is the work by Coleman et al.~\cite{coleman2008local},
who employ the correlation-clustering framework and search for exactly two communities.
However, while in that work all vertices must be
included in a cluster, in our setting we allow vertices not to be part
of any cluster. This captures the fact that polarized communities are
typically concealed within a large body of neutral vertices in a
social network. An algorithm that attempts to partition the whole
network would fail to reveal these communities.
As an additional feature, our
methods can be fine-tuned to increase or decrease the
size of the discovered communities. Finally,
while some spectral techniques promote balanced partitions, we
hypothesize that two polarized communities might be of very different
sizes, and thus our problem formulation does not enforce evenly sized
subgraphs.

Our reliance on spectral methods carries several benefits. First, it
is possible to leverage readily available, highly optimized, and
parallelized software implementations. This makes it straightforward
for the practitioner to analyze large networks in real settings using
our approach. Second, even though in this paper we focus on the case
of two communities, we can take inspiration from the existing
literature on spectral graph partitioning to easily extend our algorithms
to the case of an arbitrary number of subgraphs, e.g., by recursive
two-way partitioning or the analysis of multiple
eigenvectors \cite{shi2000normalized}.

In this paper we make the following contributions:
\squishlist

\item We formulate the \ourproblemfull\ problem  (\ourproblem) as a ``discrete eigenvector'' problem (Section \ref{sec:problem}).

\item Exploiting a reduction from classic correlation clustering, we prove that \ourproblem\ is \NP-hard (Theorem~\ref{th:hard}).

\item We devise two intuitive spectral algorithms (Section \ref{sec:algorithms}), one deterministic, and one randomized with quality guarantee $\sqrt{n}$ (Theorem~\ref{the:approx_random}),
which is tight up to constant factors. We believe these to be the first purely combinatorial bounds for spectral methods. Our results apply to graphs of arbitrary weights. Our algorithms' running time is essentially the time required to compute the first eigenvector of the adjacency matrix of the input graph.

\item Our experiments (Section \ref{sec:experiments}) on a large collection of real-world signed networks show that the proposed algorithms discover higher quality solutions, are much faster than the baselines, and can scale to much larger networks. In addition, they are able to identify ground-truth planted polarized communities in synthetic datasets.

\squishend

Related literature is discussed 
in the next section, while Section \ref{sec:conclusions} discusses future work and concludes the paper.

\section{Background and related work}
\label{sec:related}
\noindent \textbf{Signed networks.}
Signed graphs appeared in a work by Harary,
who was interested in the notion of {\em balance} in graphs~\cite{harary1953notion}.
In 1956, Cartwright and Harary generalized Heider's psychological theory
of balance in triangles of sentiments to the theory of balance in
signed graphs~\cite{cartwright1956structural}.

A more recent line of work
develops the spectral properties of signed graphs,
still related to balance theory.
Hou et al.~\cite{hou2003laplacian} prove that a connected signed graph is balanced
if and only if the smallest eigenvalue of the Laplacian is~0.
Hou~\cite{hou2005bounds} also investigates the relationship between the smallest eigenvalue of the Laplacian
and the unbalancedness of a signed graph.

Signed graphs have also been studied
in different contexts.
Guha et al.~\cite{guha2004propagation}
and Les\-ko\-vec et al.~\cite{leskovec2010signed}
study {\em directed} signed graphs
and develop {\em status theory}, 
to reason about the importance of the vertices in such graphs.
Other lines of research include edge and vertex classification~\cite{cesa2012correlation, tang2016node},
link prediction~\cite{leskovec2010predicting,symeonidis2010transitive},
community detection~\cite{ailon2008aggregating,anchuri2012communities,bansal2004correlation,swamy2004correlation},
recommendation~\cite{tang2016recommendations},
and more.
A detailed survey on the topic is provided by Tang et al.~\cite{tang2016survey}.

A few recent works explore the problem of finding
antagonistic communities in signed networks, though with approaches
fundamentally different to ours.
Lo et al.\ consider directed graphs
and search for strongly-connected positive subgraphs that are
negative bi-cliques \cite{lo2011mining}, which severely limits the size
of the resulting communities.
A relaxed variant for undirected networks was
described in subsequent work \cite{gao2016detecting}.
Chu et al.\ propose a constrained-programming objective to find
$k$ warring factions \cite{chu2016finding}, as well as an efficient
algorithm to find local optima.

\spara{Correlation Clustering.}
In the standard correlation-clustering problem~\cite{bansal2004correlation},
we ask to partition the vertices of a signed graph into clusters
so as to maximize (minimize) the number of edges that ``agree'' (``disagree'') with the partitioning,
i.e., the number of positive (negative) edges within clusters
plus the number of negative (positive) edges across clusters.
In the original problem formulations,
such as the ones studied by
Bansal et al.~\cite{bansal2004correlation},
Swamy~\cite{swamy2004correlation}, and
Ailon et al.~\cite{ailon2008aggregating},
the number of clusters is not given as input,
instead it is part of the optimization.
More recent works study the correlation-clustering problem
with additional constraints,
e.g., Giotis and Guruswami~\cite{giotis2006correlation} fix the number of clusters,
Coleman et al.~\cite{coleman2008local} consider only two clusters,
while Puleo and Milenkovic~\cite{puleo2015correlation}
consider constraints on the cluster sizes.

The problem we study
could be seen as
a variant of correlation clustering
where we search for two clusters,
while we allow vertices not to be part of any cluster.

\spara{Detecting polarization in social media.}
A number of papers have studied the problem of detecting polarization in social media
Some approaches are based on text analysis~\cite{choi2010identifying,mejova2014controversy,popescu2010detecting},
while other approaches
consider a graph-theoretic setting~\cite{akoglu2014quantifying,conover2011predicting,garimella2018quantifying}.
However, our work differs significantly from these papers,
as we consider signed networks,
we provide a correlation-clustering problem formulation,
and obtain results with approximation guarantees.

\section{Problem statement}
\label{sec:problem}

Our setting is reminiscent to the {\em correlation-clustering} problem~\cite{bansal2004correlation}, which we recall here.
Given a signed network $G = (V, E_+, E_-)$,
where $E_+$ is the set of positive edges and $E_-$ the set of negative edges,
the goal is to find
a partition of the vertices into $k$ clusters,
so as to
maximize the number of positive edges within clusters plus
the number of negative edges between clusters.

An interesting property of the correlation-clustering formulation
is that one does not need to specify in advance the number of clusters $k$,
instead it is part of the optimization.
In certain cases, however,
the number of clusters is given as input.
The general problem (given $k$) has been studied by
Giotis and Guruswami~\cite{giotis2006correlation},
while Coleman et al.~\cite{coleman2008local}
studied the \twocorrelationclustering\ problem ($k=2$).
The problem arises, for instance,
in the domain of social networks,
where two well-separated clusters reveal a polarized structure.
It can be defined as follows.

\begin{problem}[\twocc]
  \label{problem_maxagree}
   Given a signed network $G = (V,E_+,E_-)$, find a partition $S_1, S_2$ of $V$ so as to maximize
  \begin{align*}
  	\cc(S_1,S_2) =
    \sum_{\substack{i\in\{1,2\}\\(u,v) \in S_i\times S_i}}\hspace{-1em}\frac{1}{2}\mathbf 1_{E_+}(u,v)+\sum_{(u,v)\in S_1\times S_2}\hspace{-1em}\mathbf 1_{E_-}(u,v),
  \end{align*}
where $\mathbf 1_S$ is the indicator function of the set $S$.
\end{problem}

A crucial limitation of the \twocc\ problem is
that all vertices must be accounted for in one of the two clusters.
From an application perspective, however, this may be a strong assumption.
For example, in a social network, we may expect two polarized communities on a topic,
but there may be many individuals who are neutral.

In order to find communities embedded within large networks, we need to exclude neutral vertices from the solution. Therefore, a first approach might be to consider maximizing agreements including a neutral cluster, that is,
finding a partition of $V$ into $S_1$, $S_2$, and $S_0$,
so that $S_1$ and $S_2$ are the two polarized communities,
and $S_0$ is the neutral community,
and the \twocc\ objective $\cc(S_1,S_2)$ is maximized.
However, this modification does not change  the problem significantly.
It is easy to see that it is always no worse
to switch a vertex from cluster $S_0$ to one of the other two clusters.

\begin{proposition}
Let $S_0, S_1, S_2$ be any partition of $V$, with $S_0\not=\emptyset$.
Then there is always a partition $S_1', S_2'$ of $V$
(i.e., $S_1'\cup S_2' =V$ and $S_1'\cap S_2' =\emptyset$)
with $S_1\subseteq S_1'$ and $S_2\subseteq S_2'$ so that
\[
\cc(S_1',S_2') \ge \cc(S_1,S_2).
\]
\end{proposition}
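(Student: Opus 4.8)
The plan is to exhibit an explicit completion of the partition and verify the inequality term by term, relying on a single monotonicity observation. The key point is that the objective $\cc(S_1,S_2)$ is assembled \emph{entirely} from nonnegative contributions—positive edges inside clusters and negative edges across clusters—while the neutral set $S_0$ appears in neither summand. Hence the vertices of $S_0$ carry zero weight in the score, and enlarging $S_1$ (or $S_2$) can only introduce further nonnegative terms; no existing contribution can be destroyed.

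Concretely, I would first take the trivial completion $S_1' = S_1 \cup S_0$ and $S_2' = S_2$. This is immediately a valid partition: $S_1' \cup S_2' = V$, $S_1' \cap S_2' = \emptyset$, and by construction $S_1 \subseteq S_1'$ and $S_2 \subseteq S_2'$, as required. Then I would compare the three summands of $\cc$ under this substitution. The within-$S_2$ positive-edge term is unchanged. The within-cluster positive-edge term for the first cluster now ranges over $(S_1 \cup S_0)\times(S_1 \cup S_0) \supseteq S_1\times S_1$, so, every summand $\tfrac{1}{2}\mathbf 1_{E_+}(u,v)$ being nonnegative, it can only grow. Likewise the across term ranges over $(S_1 \cup S_0)\times S_2 \supseteq S_1\times S_2$, so it too can only grow. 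Summing the three comparisons yields $\cc(S_1',S_2') \ge \cc(S_1,S_2)$.

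There is essentially no hard step; the only thing to be careful about is confirming that nothing is \emph{lost} when vertices leave $S_0$, and this is immediate precisely because $S_0$ contributes nothing to either summand of $\cc$ before the move. For completeness I would note that the greedy, one-vertex-at-a-time version of the argument—mirroring the informal claim in the text that it is always no worse to switch a vertex out of $S_0$—gives the same conclusion: for each $v \in S_0$, assigning it to $S_1$ adds the nonnegative count of $v$'s positive edges into $S_1$ plus its negative edges into $S_2$, so the score is nondecreasing at every step, and iterating empties $S_0$ while preserving $S_1 \subseteq S_1'$ and $S_2 \subseteq S_2'$.
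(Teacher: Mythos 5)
Your proof is correct, and it matches the paper's (unstated, one-line) justification: the objective is a sum of nonnegative indicator terms over index sets that only grow when $S_0$ is absorbed into $S_1$, so nothing can be lost. The paper offers only the informal remark that switching a vertex out of $S_0$ is never worse; your explicit completion $S_1'=S_1\cup S_0$, $S_2'=S_2$ is the same monotonicity argument carried out in one step.
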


A further modification might be to subtract disagreements from the value of the solution,
that is, to maximize agreements minus disagreements.
In other words, we consider the following problem.
\begin{problem}[\twoccfull]
  \label{problem_maxagreedisagree}
  Given a signed network $G = (V,E_+,E_-)$, find a partition $S_0, S_1, S_2$ of $V$ so as to maximize
 
  \begin{eqnarray*}
  	\ccad(S_1,S_2) &  = &
    \sum_{\substack{i\in\{1,2\}\\(u,v) \in S_i\times S_i}}\hspace{-1em}\frac{1}{2}\left({\mathbf 1_{E_+}}(u,v) - {\mathbf 1_{E_-}}(u,v)\right) \\
&  &  + \sum_{(u,v)\in S_1\times S_2}\hspace{-1em} \left({\mathbf 1_{E_-}}(u,v) - {\mathbf 1_{E_+}}(u,v) \right),
  \end{eqnarray*}
where $\mathbf 1_S$ is the indicator function of the set $S$.
\end{problem}

Unfortunately, problem \twoccfull\ suffers from the same issue as problem \twocc:
switching a vertex from the neutral cluster $S_0$ to one of the polarized clusters $S_1$ or $S_2$
(the one that is best) leads to no worse solution according to the objective~\ccad.

\begin{proposition}
Let $S_0, S_1, S_2$ be any partition of $V$, with $S_0\not=\emptyset$.
Then there is always a partition $S_1', S_2'$ of $V$
(i.e., $S_1'\cup S_2' =V$ and $S_1'\cap S_2' =\emptyset$)
with $S_1\subseteq S_1'$ and $S_2\subseteq S_2'$ so that
\[
\ccad(S_1',S_2') \ge \ccad(S_1,S_2).
\]
\end{proposition}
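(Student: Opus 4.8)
The plan is to reduce the claim to a single-vertex move and then iterate. Pick any $w \in S_0$ and consider the two candidate partitions obtained by moving $w$ either into $S_1$ or into $S_2$, leaving everything else unchanged. Since $w$ currently lies in the neutral set, none of the pairs incident to $w$ contribute to $\ccad(S_1,S_2)$: the three sums in the objective range only over $S_1\times S_1$, $S_2\times S_2$, and $S_1\times S_2$. Hence the change in objective caused by either move depends only on the edges joining $w$ to $S_1\cup S_2$; edges from $w$ to the rest of $S_0$ are irrelevant, as their other endpoint stays neutral.

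Next I would compute these two changes explicitly. Writing $p_i$ (resp.\ $n_i$) for the number of positive (resp.\ negative) edges from $w$ to $S_i$, a short count using the definition of $\ccad$ gives, for the move $w\to S_1$,
\[
\Delta_1 = (p_1 - n_1) + (n_2 - p_2),
\]
where the first parenthesis is the net agreement of the new within-$S_1$ edges (the factor $\tfrac12$ cancels the double-counting of ordered pairs) and the second is the net agreement of the new across edges to $S_2$. By the symmetric computation, the move $w\to S_2$ yields
\[
\Delta_2 = (p_2 - n_2) + (n_1 - p_1).
\]

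The crux of the argument is the antisymmetry $\Delta_1 = -\Delta_2$, which is immediate from the two displays. Consequently $\Delta_1 + \Delta_2 = 0$, so $\max\{\Delta_1,\Delta_2\}\ge 0$: at least one of the two moves does not decrease $\ccad$. Choosing that move relocates $w$ out of $S_0$ into one of the polarized clusters while the objective weakly increases, and it only ever \emph{adds} vertices to $S_1$ or $S_2$, so the containments are preserved.

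Finally I would iterate this single-vertex step over the elements of $S_0$, or equivalently argue by induction on $\Abs{S_0}$. Each step preserves the invariant that the current polarized clusters contain the original $S_1$ and $S_2$ and that the objective has not dropped; after $\Abs{S_0}$ steps the neutral set is empty, yielding a partition $S_1',S_2'$ with $S_1\subseteq S_1'$, $S_2\subseteq S_2'$ and $\ccad(S_1',S_2')\ge\ccad(S_1,S_2)$. The only real subtlety is the sign bookkeeping in $\Delta_1$ and $\Delta_2$ — in particular noting that an edge from $w$ to $S_1$ switches from an agreement to a disagreement (and vice versa for edges to $S_2$) when one toggles which cluster $w$ joins — from which the antisymmetry, and hence the whole proposition, follows.
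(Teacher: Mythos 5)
Your proof is correct and follows exactly the route the paper intends (the paper only states the one-line idea of moving each neutral vertex to the better of the two clusters and omits the details): the gain computation $\Delta_1=(p_1-n_1)+(n_2-p_2)=-\Delta_2$ is the right bookkeeping, the antisymmetry is precisely why the better of the two moves is non-decreasing, and the iteration over $S_0$ is sound because edges into the still-neutral vertices never enter the objective. Nothing is missing.
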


A nice property of the  \ccad\ objective
is that it can be written neatly in a matrix notation.
Let $A$ be the adjacency matrix of the signed network $G = (V,E_+,E_-)$,
where
positive edges $(i,j)\in E_+$ are indicated by $A_{ij}=1$,
negative edges $(i,j)\in E_-$ are indicated by $A_{ij}=-1$,
and
non-edges are indicated by $A_{ij}=0$.
A partition $S_0, S_1, S_2$ of~$V$
can be represented by a vector $\vec{x}\in\{-1,0,1\}^n$,
whose $i$-th coordinate is
$x_i=0$ if $i\in S_0$,
$x_i=1$ if $i\in S_1$,
and
$x_i=-1$ if $i\in S_2$.
Then \twoccfull\ can be reformulated as follows.

\begin{problem}[\twoccfull]
\label{problem_maxagreedisagree_matrix}
Given a signed network $G = (V, E_+, E_-)$ with $n$ vertices and signed adjacency matrix $A$,
find a partition $S_0, S_1, S_2$ of $V$ represented by vector $\vec{x}\in\{-1,0,1\}^n$
maximizing
\[
\ccad(S_1,S_2) = \vec{x}^T A\, \vec{x}.
\]
\end{problem}

Since our goal is to discover polarized communities
$S_1$ and $S_2$ that are potentially concealed within other neutral vertices $S_0$,
we want to find minimal sets $S_1$ and $S_2$.
This can be achieved by normalizing $\vec{x}^T A\, \vec{x}$
with the size of $S_1$ and $S_2$, which in vector form is $\vec{x}^T \vec{x}$.
This consideration motivates our last problem formulation,
which we dub \ourproblemfull\ (\ourproblem).

\smallskip
\begin{mdframed}[innerbottommargin=4pt,innertopmargin=0pt,innerleftmargin=5pt,innerrightmargin=5pt,backgroundcolor=gray!10,roundcorner=10pt]
\begin{problem}[\ourproblem]
\label{problem_sqp}
Given a signed network $G = (V, E_+, E_-)$ with $n$ vertices and signed adjacency matrix $A$, find
a vector $\vec{x}\in\{-1,0,1\}^n$ that maximizes
\[
\frac{\vec{x}^T A\, \vec{x}}{\vec{x}^T \vec{x}}.
\]
\end{problem}
\end{mdframed}

In the rest of this paper we refer to the objective function of Problem \ref{problem_sqp} as \emph{\score}.
As \score is penalized with the size of the solution, vertices are only added to one of the two clusters if they contribute significantly to the objective.
We show this problem to be \NPhard\ (proof in the Appendix) and propose algorithms with approximation guarantees.

\begin{theorem}\label{th:hard}
  \ourproblem is \NPhard.
\end{theorem}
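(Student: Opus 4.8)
The plan is to reduce from \textsc{Max-Cut}, exploiting its equivalence with the classic two-cluster correlation-clustering objective: on a signed graph whose edges are all negative, maximizing $\vec{x}^T A\,\vec{x}$ over full partitions $\vec{x}\in\{-1,1\}^n$ equals $4c-2|E_-|$, where $c$ is the size of the cut induced by $\vec{x}$, so maximizing this unnormalized objective is exactly \textsc{Max-Cut}. By the two propositions above, the unnormalized \twoccfull\ optimum is always attained at full support, which is why that objective coincides with a genuine partition problem. The difficulty in carrying this over to \ourproblem\ is the normalization by $\vec{x}^T\vec{x}$: once empty entries are allowed and solutions are scored by the \score objective $\ccad(S_1,S_2)/(\lvert S_1\rvert+\lvert S_2\rvert)$, the optimizer may prefer a small, dense support rather than the cut of the whole graph, so the plain all-negative reduction no longer preserves optima.

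To neutralize the normalization I would force every optimal solution to have full support by attaching a heavy positive gadget to each vertex. Concretely, given a \textsc{Max-Cut} instance $H=(V_H,E_H)$, I build a signed graph $G$ as follows: turn every edge of $H$ into a negative edge, and, for each vertex $v\in V_H$, introduce a fresh positive clique $C_v$ on $M$ new vertices and join $v$ to all of $C_v$ with positive edges, for a parameter $M=\mathrm{poly}(|V_H|,|E_H|)$ fixed later. Writing $N$ for the total number of vertices of $G$, call a solution \emph{canonical} if every anchor $v$ is included and each gadget $C_v$ is placed in its entirety on the same side as $v$. For a canonical solution the contribution of the gadgets to $\vec{x}^T A\,\vec{x}$ is a constant independent of the chosen partition of $V_H$, the denominator equals the constant $N$, and the only variable part of the numerator is precisely $4c-2|E_H|$. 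Hence, restricted to canonical solutions, the \score objective is an increasing affine function of the cut value $c$, so a canonical optimum of \ourproblem\ yields a maximum cut of $H$, and conversely a maximum cut gives a canonical solution of the corresponding value.

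The crux, and the step I expect to be the main obstacle, is showing that for $M$ chosen large enough (yet polynomial) every optimal \ourproblem\ solution is canonical. I would argue this by a marginal/exchange analysis: since any canonical solution has objective value $\Theta(M)$ while the per-vertex gadget contribution is also $\Theta(M)$, dropping an anchor or any gadget vertex, splitting a gadget across the two sides, or flipping a gadget to the side opposite its anchor each strictly decreases the \score objective once $M$ dominates the $O(|E_H|)$ fluctuation coming from the negative \textsc{Max-Cut} edges. Making these inequalities precise---computing the change in the ratio $\vec{x}^T A\,\vec{x}/\vec{x}^T\vec{x}$ under each local move and verifying that a single polynomial bound on $M$ simultaneously rules out all non-canonical deviations---is the technical heart of the proof; once this structural lemma is in place, the equivalence with \textsc{Max-Cut} established above closes the reduction, and since \textsc{Max-Cut} is \NPhard, so is \ourproblem.
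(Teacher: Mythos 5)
Your proposal is correct and follows essentially the same route as the paper: the paper also forces full support by attaching to each original vertex a positive clique of polynomially bounded size ($m>3n$) joined to it by positive edges, then argues (i) among full-support solutions the ratio objective is an increasing affine function of the two-cluster correlation-clustering objective and (ii) removing any set of vertices from a full-support solution strictly decreases the ratio. The only differences are cosmetic: the paper reduces from \twocc (NP-hard by Shamir et al.) rather than from \textsc{Max-Cut} (which is just \twocc on all-negative graphs), and the ``technical heart'' you defer is precisely the inequality $\Delta(R)(n+nm) > r\,\nu$ that the paper verifies explicitly for $m>3n$, so your sketched exchange analysis does go through.
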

%


It should be noted that \ourproblem does not enforce balance between the communities.
This can be beneficial if there exist polarized communities of significantly different size in the input network. In an extreme case,
the solution could even be comprised of a single cluster if there is a large, dense community that overwhelms any other polarized formation.

\section{Algorithms}
\label{sec:algorithms}
The formulation of \ourproblem\ is suggestive of spectral theory,
which we utilize to design our algorithms.
We propose and analyze two spectral algorithms:
one is deterministic,
while the second is randomized and achieves approximation guarantee $\sqrt{n}$.
The running time of both algorithms is dominated by the computation of a spectral decomposition of the adjacency matrix. In practice, this can be done using readily available implementations that exploit sparsity and can run in parallel on multiple cores.

The first algorithm, \eigensign, works by simply discretizing the entries of the eigenvector of the adjacency matrix corresponding to the largest eigenvalue.

%

To illustrate the difficulty of approximating \ourproblem, we analyze the following simple algorithm, which we refer to as \simplealgo. Pick an arbitrary edge: if it is positive, put the endpoints in one cluster, leaving the other cluster empty; if it is negative, put the endpoints in separate clusters.
\begin{proposition}
  \label{the:approx_eigensign}
The \simplealgo algorithm gives an $n$-appro\-xi\-ma\-tion of the optimum.
\end{proposition}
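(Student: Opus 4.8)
The plan is to sandwich the approximation ratio between an easy lower bound on the value returned by \simplealgo and an easy upper bound on the optimum, and to observe that their quotient never exceeds $n$. First I would pin down the value produced by \simplealgo. Its output is a vector $\vec{x}\in\{-1,0,1\}^n$ with exactly two nonzero coordinates $x_u,x_v$, the endpoints of the chosen edge $(u,v)$, so $\vec{x}^T\vec{x}=2$. Since $A$ has zero diagonal, the only surviving terms of $\vec{x}^T A\,\vec{x}$ are the symmetric pair indexed by $(u,v)$ and $(v,u)$, giving $\vec{x}^T A\,\vec{x}=2\,A_{uv}x_ux_v$. The sign rule is designed exactly so that $A_{uv}x_ux_v=1$ in either case: for a positive edge one sets $x_u=x_v=1$ with $A_{uv}=1$, and for a negative edge one sets $x_u=1,\ x_v=-1$ with $A_{uv}=-1$. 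Hence the \score of this solution is $2/2=1$, and in particular the optimum satisfies $\mathrm{OPT}\ge 1$.

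Next I would upper bound the optimum over all feasible $\vec{x}$. For a nonzero $\vec{x}\in\{-1,0,1\}^n$ let $k=\vec{x}^T\vec{x}$ be its number of nonzero entries. Expanding $\vec{x}^T A\,\vec{x}=\sum_{i\ne j}x_iA_{ij}x_j$ (the diagonal vanishes), each of the at most $k(k-1)$ nonzero terms has absolute value at most $1$, so $\vec{x}^T A\,\vec{x}\le k(k-1)\le k(n-1)$. Dividing by $\vec{x}^T\vec{x}=k$ shows that every feasible solution has \score at most $n-1$, whence $\mathrm{OPT}\le n-1<n$. (Equivalently, one may invoke the Rayleigh-quotient bound $\vec{x}^T A\,\vec{x}/\vec{x}^T\vec{x}\le\lambda_{\max}(A)$ together with $\lambda_{\max}(A)\le\max_i\sum_j|A_{ij}|\le n-1$.) Combining the two bounds gives $\mathrm{OPT}\le n-1\le n\cdot 1$, i.e.\ an $n$-approximation.

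The argument is elementary, so there is no single hard step; the points that need care are (i) checking that the sign convention forces the algorithm's value to be exactly $1$ irrespective of the chosen edge's sign, which is precisely why such a trivial procedure already attains a bounded ratio, and (ii) assuming the instance contains at least one edge, so that $\mathrm{OPT}>0$ and the ratio is well defined. The real content here is interpretive rather than technical: the bound certifies that even a naive edge-picking heuristic is an $n$-approximation, thereby framing $n$ as the baseline ratio that the spectral algorithms of the next section aim to improve upon, and foreshadowing the $\sqrt{n}$ guarantee of \randalgo.
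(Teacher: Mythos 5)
Your proof is correct and follows the same two-step skeleton as the paper's: the sign rule forces the algorithm's solution to have \score exactly $1$, and the optimum is at most $n$, so the ratio is at most $n$. The only divergence is in how the upper bound on $\mathrm{OPT}$ is justified --- the paper invokes $\mathrm{OPT}\le\lambda_1\le n$ via the Rayleigh quotient, whereas your primary argument is the elementary count that each of the at most $k(k-1)$ off-diagonal terms contributes at most $1$ in absolute value, yielding the marginally sharper bound $\mathrm{OPT}\le n-1$; since you also note the spectral route parenthetically, nothing essential differs.
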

\begin{proof}
  The described algorithm outputs a solution $\vec{x}$ such that
  \[
  \frac{\vec{x}^T A\, \vec{x}}{\vec{x}^T \vec{x}} \geq 1.
  \]
  The result now follows from the fact that
  $OPT \leq \lambda_1 \leq n$,
  where $\lambda_1$ is the largest eigenvalue of $A$.\end{proof}

In the case of networks with arbitrary real weights, it can be shown that despite the close relationship between the \ourproblem objective and the leading eigenvector of $A$, \eigensign cannot do better than this up to constant factors. Consider a fully connected network with one edge $(u,v)$ of weight $w \gg 0$. The rest of the edges have weight close to zero. The primary eigenvector of the adjacency matrix has two entries --- those corresponding to $u$ and $v$ --- of the form $1/\sqrt 2-\epsilon$ for some small $\epsilon$, while the rest are close to zero.
We construct a solution vector $\vec{y}$ as follows:
the two entries corresponding to $u$ and $v$ are set to 1, and the rest to 0.
We have $\vec{y}^TA\,\vec{y}/\vec{y}^T\vec{y}\approx w$.
On the other hand, the \eigensign\ algorithm outputs a vector $\vec{x}$ for which
$\vec{x}^TA\,\vec{x}/\vec{x}^T\vec{x}\approx 2w/n$.
It should be noted however, that the focus of this paper is the analysis of the \ourproblem\ problem on signed networks.
The approximation capabilities of the \eigensign\ algorithm on signed networks
(the adjacency matrix $A$ contains entries with values only $-1$, 0, and 1) are left open.

\eigensign generally outputs a solution comprised of all the vertices in the graph --- unless some components of the eigenvector are exactly zero  ---
which is, of course, counter to the motivation of our problem setting.

To overcome this issue we propose a randomized algorithm, \randalgo,
which also computes the first eigenvector, i.e., $\vec{v}$, of the adjacency matrix.
Instead of simply discretizing the entries of $\vec{v}$, it randomly sets each entry of $\vec{x}$ to 1 or -1 with probabilities determined by the entries of $\vec{v}$.
Entries $v_i$ with large magnitude $|v_i|$ are more likely to turn into $\mathit{sgn}(v_i)$ ($-1$ or 1),
while
entries $v_i$ with small magnitude $|v_i|$ are more likely to turn into 0.
For details see Algorithm~\ref{alg:rand_eigensign}.
Note that if $\vec{x}$ is the output of \randalgo, then $\mathbb E[\vec{x}]=\vec{v}$.

The next theorem shows approximation guarantees of \randalgo for signed networks.

\begin{theorem}
  \label{the:approx_random}
Algorithm \randalgo gives a $\sqrt n$-appro\-xi\-ma\-tion of the optimum in expectation.
\end{theorem}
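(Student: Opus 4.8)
The plan is to analyze the randomized rounding directly through linearity of expectation, exploiting the two facts that the diagonal of $A$ vanishes and that $\vec{v}$ is a unit-norm eigenvector. Concretely, I would fix the rounding so that each coordinate is set independently to $x_i=\mathit{sgn}(v_i)$ with probability $|v_i|$ and to $0$ otherwise; this is well defined because $|v_i|\le\|\vec{v}\|_2=1$, and it realizes $\mathbb{E}[x_i]=v_i$ (hence $\mathbb{E}[\vec{x}]=\vec{v}$, as announced before the theorem) as well as $\mathbb{E}[x_i^2]=|v_i|$. The two quantities I need to control are the expected numerator $\mathbb{E}[\vec{x}^TA\,\vec{x}]$ and the expected denominator $\mathbb{E}[\vec{x}^T\vec{x}]$ of the \score.

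First I would compute the numerator. Since distinct coordinates are independent and $A_{ii}=0$ for a graph without self-loops, only the off-diagonal terms survive and they factorize:
\[
\mathbb{E}[\vec{x}^TA\,\vec{x}]=\sum_{i\neq j}A_{ij}\,\mathbb{E}[x_i]\,\mathbb{E}[x_j]=\sum_{i\neq j}A_{ij}v_iv_j=\vec{v}^TA\,\vec{v}=\lambda_1,
\]
using $A\vec{v}=\lambda_1\vec{v}$ and $\|\vec{v}\|_2=1$. For the denominator, the same rounding gives
\[
\mathbb{E}[\vec{x}^T\vec{x}]=\sum_i\mathbb{E}[x_i^2]=\sum_i|v_i|=\|\vec{v}\|_1\le\sqrt{n}\,\|\vec{v}\|_2=\sqrt{n},
\]
where the inequality is Cauchy--Schwarz. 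Combining the two, and recalling from the proof of Proposition~\ref{the:approx_eigensign} that $OPT\le\lambda_1$ (with $\lambda_1\ge 0$ since $A$ has zero trace), I obtain
\[
\frac{\mathbb{E}[\vec{x}^TA\,\vec{x}]}{\mathbb{E}[\vec{x}^T\vec{x}]}=\frac{\lambda_1}{\|\vec{v}\|_1}\ge\frac{\lambda_1}{\sqrt{n}}\ge\frac{OPT}{\sqrt{n}},
\]
which is exactly the claimed $\sqrt{n}$ guarantee. Note that the improvement from the trivial factor $n$ (Proposition~\ref{the:approx_eigensign}) down to $\sqrt{n}$ comes entirely from the denominator bound $\|\vec{v}\|_1\le\sqrt{n}$, i.e., from the fact that the rounded solution has small expected support.

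The main obstacle is that this argument bounds the ratio of expectations, whereas the \score\ evaluated on a single output is the random ratio $\vec{x}^TA\,\vec{x}/\vec{x}^T\vec{x}$, and $\mathbb{E}[N/D]\neq\mathbb{E}[N]/\mathbb{E}[D]$ in general. I expect the theorem to be read in the ratio-of-expectations sense of ``in expectation'', which is the natural normalization here: the expected numerator of the objective is at least $OPT$, while the expected support size is at most $\sqrt{n}$. A pointwise conversion is delicate, because the realized numerator $N=\vec{x}^TA\,\vec{x}$ can be negative and the pointwise bound $\vec{x}^T\vec{x}\le n$ only yields $N/D\ge N/n$ when $N\ge 0$; obtaining a genuine lower bound on $\mathbb{E}[N/D]$ would additionally require controlling the negative tail of $N$ together with the joint law of $(N,D)$. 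A minor secondary point is the degenerate event $\vec{x}=\vec{0}$, of probability $\prod_i(1-|v_i|)$, on which both $N$ and $D$ vanish; this is harmless for the ratio of expectations, since it contributes nothing to either quantity.
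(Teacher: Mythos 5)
Your computations are correct as far as they go, and they isolate exactly the two quantities that drive the paper's argument: the expected numerator equals $\lambda_1$ (since $\mathbb{E}[x_i]=v_i$ and the diagonal of $A$ vanishes), and the expected support size equals $\|\vec{v}\|_1\le\sqrt{n}$. But the statement you end up proving, namely $\mathbb{E}[\vec{x}^TA\vec{x}]/\mathbb{E}[\vec{x}^T\vec{x}]\ge OPT/\sqrt{n}$, is not the statement the paper proves. The paper establishes the stronger (and, for a randomized approximation algorithm, the standard) guarantee
\[
\mathbb{E}\left[\frac{\vec{x}^TA\vec{x}}{\vec{x}^T\vec{x}}\right]\;\ge\;\frac{\lambda_1}{2+\sqrt{n-2}}\;\ge\;\frac{OPT}{O(\sqrt{n})},
\]
i.e., a lower bound on the expected \score\ of the solution actually returned. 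You explicitly flag the discrepancy between $\mathbb{E}[N/D]$ and $\mathbb{E}[N]/\mathbb{E}[D]$ and then elect to read ``in expectation'' in the weaker ratio-of-expectations sense; that choice leaves the theorem as intended unproved, and the conversion you describe as ``delicate'' is precisely the content of the paper's proof.

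The missing idea is a conditioning-plus-Bayes decomposition. The paper writes $\mathbb{E}[N/D]$ by conditioning on $\vec{x}^T\vec{x}=k$, pulls out the factor $1/k$, expands the numerator over pairs $i\ne j$, and applies Bayes' theorem to swap the conditioning, so that each pair contributes $A_{ij}v_iv_j\,\mathbb{E}\bigl[1/\vec{x}^T\vec{x}\,\big|\,x_i,x_j\ne 0\bigr]$. Jensen's inequality on the convex function $t\mapsto 1/t$ then gives $\mathbb{E}[1/\vec{x}^T\vec{x}\mid x_i,x_j\ne 0]\ge 1/\mathbb{E}[\vec{x}^T\vec{x}\mid x_i,x_j\ne 0]\ge 1/(2+\sqrt{n-2})$, where the last step is exactly your bound $\mathbb{E}[\vec{x}^T\vec{x}]=\|\vec{v}\|_1\le\sqrt{n}$ applied to the $n-2$ unconditioned coordinates. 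Summing over pairs recovers $\sum_{i\ne j}A_{ij}v_iv_j=\lambda_1$ and yields the claimed bound on the expected ratio; the degenerate event $\vec{x}=\vec{0}$ is excluded because the sum over $k$ starts at $k=1$. So your calculation is sound but stops one step short of the theorem: you need this Bayes-plus-Jensen argument (or some other device controlling the joint law of numerator and denominator) to pass from the ratio of expectations to the expectation of the ratio.
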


\begin{proof}
First, observe that we can rewrite the expected value of the objective as follows:
\begin{align*}
  \mathbb E \left [\frac{\vec{x}^TA\vec{x}}{\vec{x}^T\vec{x}} \right ] & = \sum_{k=1}^n  \mathbb E \left [\left . \frac{\vec{x}^TA\vec{x}}{\vec{x}^T\vec{x}}\right |\vec{x}^T\vec{x}=k \right ]Pr(\vec{x}^T\vec{x}=k)
  \\ & = \sum_{k=1}^n \frac{1}{k} \mathbb E\left[\vec{x}^TA\vec{x}|\vec{x}^T\vec{x}=k\right]Pr(\vec{x}^T\vec{x}=k)
  \\ & = \sum_{k=1}^n \frac{1}{k} \sum_{i\neq j}\mathbb E\left[A_{ij}x_ix_j|\vec{x}^T\vec{x}=k\right]Pr(\vec{x}^T\vec{x}=k).
\end{align*}
If we define $s_{ij}=sgn(v_i)sgn(v_j)$, where $sgn(x)$ denotes the sign of $x\in \mathbb R$, for all $i,j$ we have
\begin{align}
\label{eq:expected_expansion}
  &\mathbb E\left[A_{ij}x_ix_j|\vec{x}^T\vec{x}=k\right]Pr(\vec{x}^T\vec{x}=k) \nonumber
  \\ & = A_{ij}s_{ij}Pr(x_i=1,x_j=1|\vec{x}^T\vec{x}=k)Pr(\vec{x}^T\vec{x}=k).
\end{align}
We now invoke Bayes' theorem and proceed.
\begin{align*}
   &\sum_{k=1}^n \frac{1}{k} \sum_{i\neq j}A_{ij}s_{ij}Pr(x_i=1,x_j=1)Pr(\vec{x}^T\vec{x}=k|x_i=1,x_j=1)
  \\ & = \sum_{k=1}^n \frac{1}{k} \sum_{i\neq j}A_{ij}v_iv_jPr(\vec{x}^T\vec{x}=k|x_i=1,x_j=1)
  \\ & = \sum_{i\neq j} A_{ij}v_iv_j \sum_{k=1}^n \frac{1}{k} Pr(\vec{x}^T\vec{x}=k|x_i=1,x_j=1)
  \\ & = \sum_{i\neq j}A_{ij}v_iv_j \mathbb E\left [\frac{1}{\vec{x}^T\vec{x}}|x_i=1,x_j=1 \right ].
\end{align*}
Since $1/x$ is a convex function, by Jensen's inequality it is
\[
\mathbb E\left [\frac{1}{\vec{x}^T\vec{x}}|x_i=1,x_j=1 \right ] \geq \frac{1}{\mathbb E\left [\vec{x}^T\vec{x}|x_i=1,x_j=1 \right ]}.
\]
Furthermore, for any $i,j,$
\[
\mathbb E\left [\vec{x}^T\vec{x}|x_i=1,x_j=1 \right ] \leq 2+\sqrt{n-2}.
\]
To see this, observe that $\mathbb E\left [\vec{x}^T\vec{x}\right ]=\|\vec{v}\|_1\leq \sqrt n$.
\noindent So we have
\[
\mathbb E\left [\frac{1}{\vec{x}^T\vec{x}}|x_i=1,x_j=1 \right ] \geq \frac{1}{2+\sqrt{n-2}}.
\]
Therefore,
\begin{align*}
  \mathbb E \left [\frac{\vec{x}^TA\vec{x}}{\vec{x}^T\vec{x}} \right ] & = \sum_{i\neq j}A_{ij}v_iv_j \mathbb E\left [\frac{1}{\vec{x}^T\vec{x}}|x_i=1,x_j=1 \right ]
  \\ & \geq \sum_{i\neq j}A_{ij}v_iv_j \frac{1}{2+\sqrt{n-2}}  = \frac{\lambda_1}{2+\sqrt{n-2}}.
\end{align*}
That is,
$$
O(\sqrt n) \mathbb E \left [\frac{\vec{x}^TA\vec{x}}{\vec{x}^T\vec{x}} \right ] \geq \lambda_1 \geq OPT.
$$ \end{proof}

In the appendix we show that this result is tight.

\begin{algorithm}[t]
  \caption{\eigensign}
  Input: adjacency matrix $A$
  \begin{algorithmic}[1]
    \STATE Compute $\vec{v}$, the eigenvector corresponding to the largest eigenvalue $\lambda_1$ of $A$.
    \STATE Construct $\vec{x}$ as follows: for each $i \in \{1, \dots, n\}$, $x_i=\mathit{sgn}(v_i)$.
    \STATE Output $\vec{x}$.
  \end{algorithmic}
  \label{alg:eigensign}
\end{algorithm}
\begin{algorithm}[t]
  \caption{\randalgo}
  Input: adjacency matrix $A$
  \begin{algorithmic}[1]
\STATE Compute $\vec{v}$, the eigenvector corresponding to the largest eigenvalue $\lambda_1$ of $A$.
\STATE Construct $\vec{x}$ as follows: for each $i \in \{1, \dots, n\}$, run a Bernoulli experiment with success probability $|v_i|$. If it succeeds, then $x_i=\mathit{sgn}(v_i)$, otherwise $x_i=0$.
\STATE Output $\vec{x}$.
  \end{algorithmic}
  \label{alg:rand_eigensign}
\end{algorithm}

\subsection{Enhancements for practical use}
\label{subsection:practical}

When using these algorithms to analyze real-world networks in practical applications,
it might be beneficial to apply tweaks to enhance their flexibility and produce a wider variety of results.
We propose the following simple enhancements.

\eigensign: As discussed above, \eigensign always outputs a solution involving all the vertices in the network. We can circumvent this shortcoming by including only those vertices such that the corresponding entry of the eigenvector $\vec{v}$ is at least a user-defined threshold $\tau$. That is, $x_i=\mathit{sgn}(v_i)$ if $|v_i| \geq \tau$, 0 otherwise.

\randalgo:
The $\sqrt{n}$-approximation guaranteed by \randalgo\ is matched in the extreme case in which all entries of the eigenvector $\vec{v}$ are of equal magnitude. Paradoxically, in this situation a solution comprised of all vertices would be optimal, but each vertex is included with a small probability of $1/\sqrt n$. We could of course fix this by modifying the probabilities to be $\min\{1, \sqrt n |v_i|\}$ for each $i$. However, in the opposite extreme, where most of the magnitude of $\vec{v}$ is concentrated in one entry, modifying the probabilities this way might disproportionately boost the likelihood of including undesirable vertices. An adequate multiplicative factor for both cases is $\|\vec{v}\|_1$, modifying the probabilities to be $\min\{1, \|\vec{v}\|_1 |v_i|\}$ for each $i$; in the first case, all vertices are taken with probability 1, while in the second, the probabilities remain almost unchanged. We employed this factor in our experiments with satisfactory results.

An obvious question arising is whether the approximation guarantee of \randalgo  could be improved using the modification described above.
This question is left for future investigation.

\section{Experimental Assessment}
\label{sec:experiments}
This section presents the evaluation of the proposed algorithms:
first (Section~\ref{sec:exp1}) we present a characterization of the polarized communities discovered by our methods; then  (Section~\ref{sec:exp2}) we compare our methods against non-trivial baselines in terms of objective, efficiency and scalability, and ability to detect ground-truth planted polarized communities in synthetic datasets. Finally, we show a case study about political debates (Section~\ref{sec:exp3}).

\spara{Datasets.}
We select publicly-available real-world signed networks, whose main characteristics are summarized in Table~\ref{tab:datasets}.
\textsf{HighlandTribes}\footnote{\href{http://konect.cc}{konect.cc}\label{foot:k}} represents the alliance structure of the Gahuku--Gama tribes of New Guinea.
\textsf{Cloister}$^{\ref{foot:k}}$ contains the esteem/disesteem relations of monks living in a cloister in New England (USA).
\textsf{Congress}$^{\ref{foot:k}}$ reports (un/)favorable mentions of politicians speaking in the US Congress.
\textsf{Bitcoin}\footnote{\href{http://snap.stanford.edu}{snap.stanford.edu}\label{foot:s}} and \textsf{Epinions}$^{\ref{foot:s}}$ are who-trusts-whom networks of the users of Bitcoin OTC and Epinions, respectively.
\textsf{WikiElections}$^{\ref{foot:k}}$ includes the votes about admin elections of the users of the English Wikipedia.
\textsf{Referendum}%
\footnote{\href{https://www.researchgate.net/publication/324517807_Annotated_Corpus_for_Stance_Detection_-_Italian_Constitutional_Referendum_2016}{researchgate.net/publication/324517807\_Annotated\_Corpus\_for\_Stance\_Detection\_-\_Italian\_Constitutional\_Referendum\_2016}}%
~\cite{lai2018stance} records Twitter data about the 2016 Italian Referendum: an interaction is negative if two users are classified with different stances, and is positive otherwise.
\textsf{Slashdot}$^{\ref{foot:s}}$ contains friend/foe links between the users of Slashdot.
The edges of \textsf{WikiConflict}$^{\ref{foot:s}}$ represent positive and negative edit conflicts between the users of the English Wikipedia.
\textsf{WikiPolitics}$^{\ref{foot:k}}$ represents interpreted interactions between the users of the English Wikipedia that have edited pages about politics.

\noindent In order to study scalability, we artificially augment two of the largest datasets to produce networks with millions of vertices and tens of millions of edges (details in Section~\ref{sec:exp2}).

\begin{table}[t!]
\centering
\caption{\label{tab:datasets}Signed networks used:
number of vertices and edges;
ratio of negative edges ($\rho_- = \frac{|E_-|}{|E_+ \cup E_-|}$);
$L_1$-norm of the eigenvector corresponding the largest eigenvalue of $A$ ($\|\vec{v}\|_1$);
 and, ratio of non-zero elements of $A$ ($\delta = \frac{2|E_+ \cup E_-|}{|V|(|V|-1)}$).}
\vspace{-4mm}
\centerline{
\small
\begin{tabular}{lrrrrr}
\toprule
Real-world datasets & $|V|$    &    $|E_+ \cup E_-|$    &    $\rho_-$    &    $\|\vec{v}\|_1$    &    $\delta$\\
\midrule
\textsf{HighlandTribes} &  $16$\,\,\, &  $58$\,\,\,\, & $0.50$ &  $3.61$    &    $0.48$\,\,\,\,\,\,\,\,\,\,\\
\textsf{Cloister}       &  $18$\,\,\, & $125$\,\,\,\, & $0.55$ &  $3.71$    &    $0.81$\,\,\,\,\,\,\,\,\,\,\\
\textsf{Congress}       & $219$\,\,\, & $521$\,\,\,\, & $0.20$    &    $10.51$    &    $0.02$\,\,\,\,\,\,\,\,\,\,\\
\textsf{Bitcoin}        &   $5$\,k    &  $21$\,k\,    & $0.15$    &    $31.21$    &    $1.2e\!-\!03$\\
\textsf{WikiElections}  &   $7$\,k    & $100$\,k\,    & $0.22$    &    $35.96$    &    $3.9e\!-\!03$\\
\textsf{Referendum}     &  $10$\,k    & $251$\,k\,    & $0.05$    &    $42.66$    &    $4.2e\!-\!03$\\
\textsf{Slashdot}       &  $82$\,k    & $500$\,k\,    & $0.23$    &    $59.46$    &    $1.4e\!-\!04$\\
\textsf{WikiConflict}   & $116$\,k    &   $2$\,M      & $0.62$ &$119.66$    &    $2.9e\!-\!04$\\
\textsf{Epinions}       & $131$\,k    & $711$\,k\,    & $0.17$    &    $72.20$    &    $8.2e\!-\!05$\\
\textsf{WikiPolitics}   & $138$\,k    & $715$\,k\,    & $0.12$    &    $91.48$    &    $7.4e\!-\!05$\\
\midrule
\textsf{WikiConflict$16|V|$}   & $1$\,M    &   $67$\,M      & $0.62$ &$129.04$    &    $3.4e\!-\!05$\\
\textsf{Epinions$16|V|$}       & $2$\,M    & $23$\,M    & $0.17$    &    $75.99$    &    $9.5e\!-\!06$\\
\bottomrule
\end{tabular}
}
\end{table}

\spara{Implementation.}
All methods, with the exception of algorithm {\sc FOCG} (details in Section~\ref{sec:exp2}), are implemented in Python (v. 2.7.15) and compiled by Cython.
The experiments run on a machine equipped with Intel Xeon CPU at 2.1GHz and 128GB RAM.\footnote{Code and datasets available at \href{https://github.com/egalimberti/polarized_communities}{github.com/egalimberti/polarized\_communities}.}

\begin{figure}[t!]
\centerline{
\includegraphics[width=1\columnwidth]{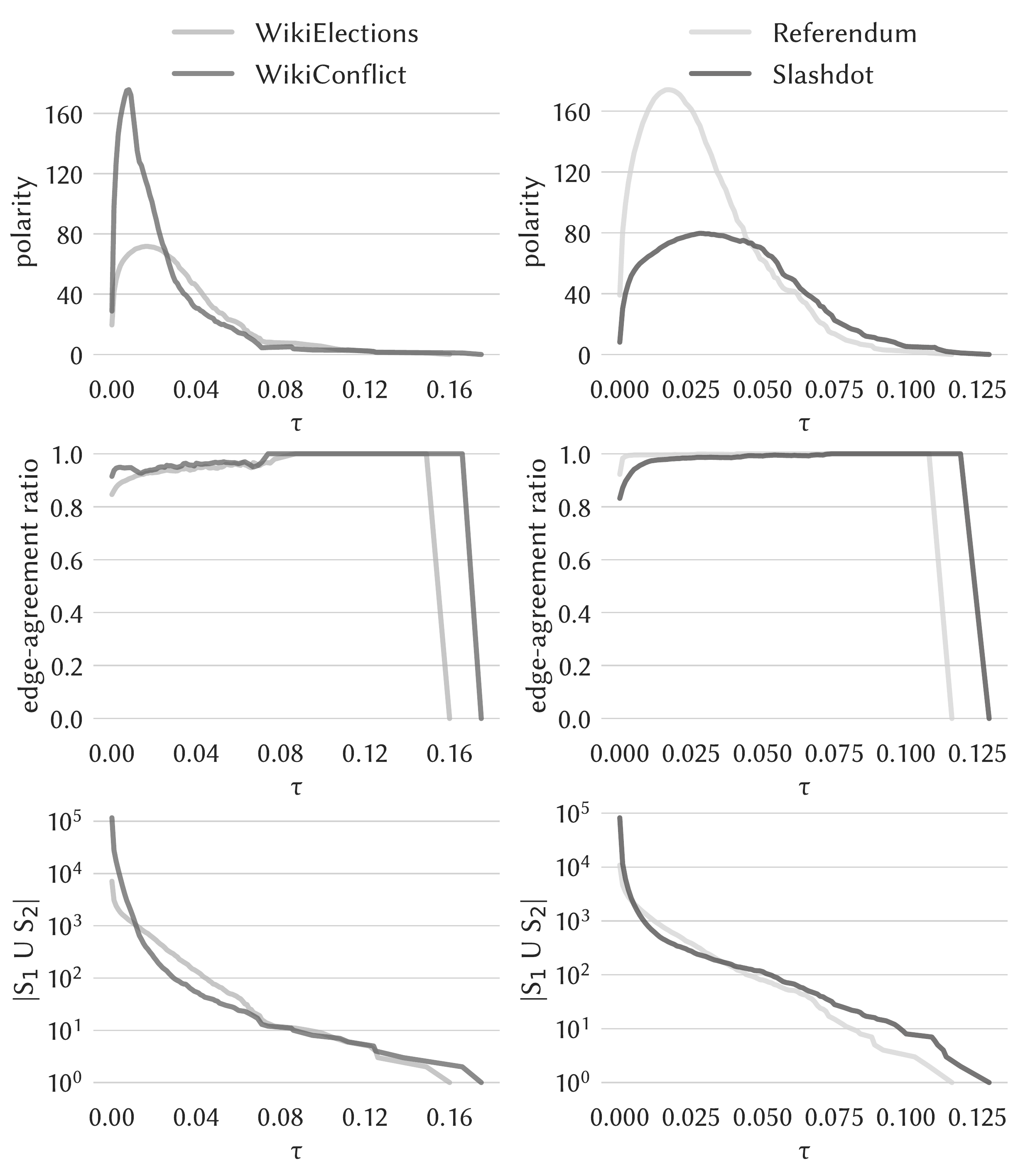}
}
\vspace{-4mm}
\caption{\label{fig:evaluation_tau} Solutions produced by  {\sc E} as a function of $\tau$.}
\end{figure}
\begin{figure}[t!]
\centerline{
\includegraphics[width=1\columnwidth]{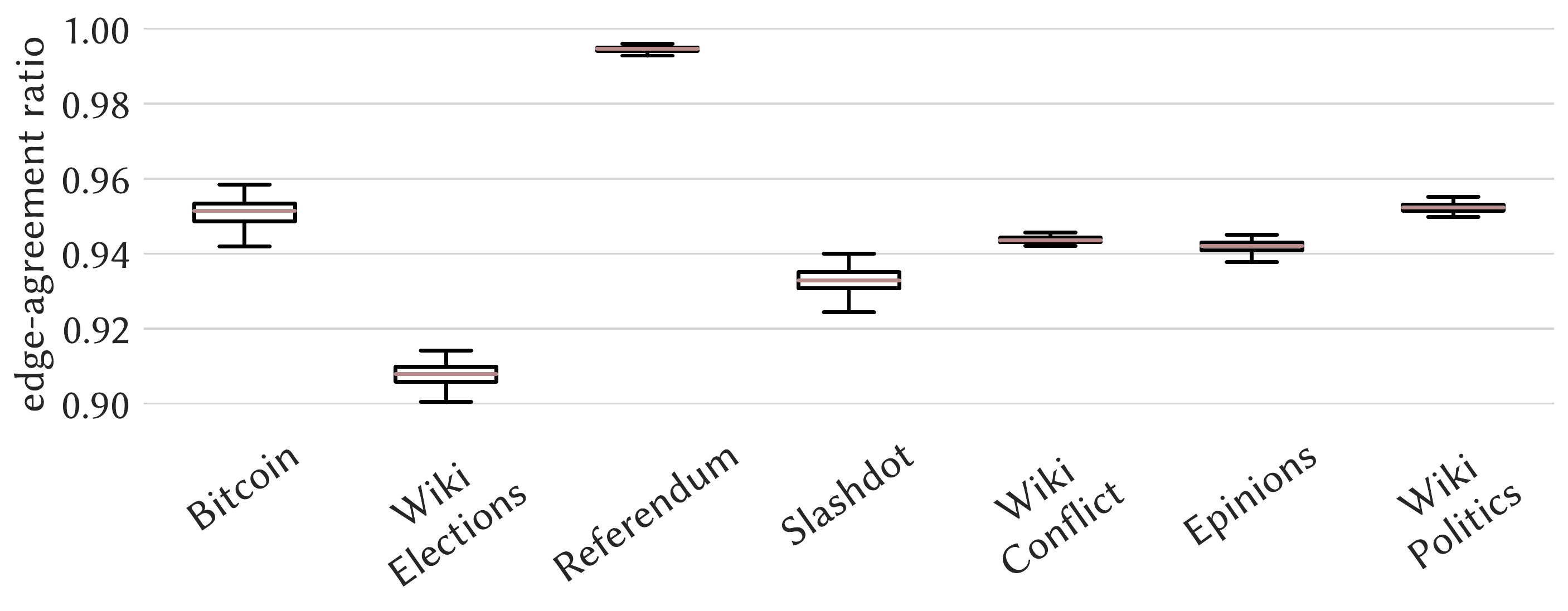}
}
\vspace{-4mm}
\caption{\label{fig:evaluation_re} Edge-agreement ratio of the solutions of {\sc RE}.}
\end{figure}
\subsection{Solutions characterization}\label{sec:exp1}
We first characterize the solutions discovered by our methods \eigensign (for short {\sc E}) and \randalgo ({\sc RE}), and we show how the tweaks described in Section~\ref{subsection:practical} enhance their flexibility in producing a wider variety of results.
In particular, algorithm {\sc E} evaluates the threshold $\tau$ for each $|v_i|$ discretized at the third decimal digit.
This operation is carried out efficiently, since $\vec{v}$ is computed only once regardless of the number of evaluated values of $\tau$.
On the other hand, algorithm {\sc RE} employs $\|v\|_1$ as multiplicative factor, therefore the probabilities are modified to be $\min\{1, \|\vec{v}\|_1 |v_i|\}$.
In the following, we refer to the two communities included in the solutions as $S_1$ and $S_2$, namely the subsets of vertices that are assigned with $1$ and $-1$, respectively, by the solution vector $\vec{x}$.

Figure~\ref{fig:evaluation_tau} shows how the solutions returned by algorithm {\sc E} are affected by parameter $\tau$ in terms of \score, edge-agreement ratio (i.e., the portion of edges in the solution that comply with the polarized structure), and size on four datasets.
In all of them, the three measures follow very similar trends.
The highest \score is achieved at about a fourth of the domain of $\tau$, when most of the neutral vertices are discarded.
The edge-agreement ratio, instead, is consistently close or equal to $1$: the solutions have a coherent polarized structure regardless of the chosen $\tau$.
Finally, as expected, the number of vertices included in the solutions decreases as $\tau$ grows, and presents a substantial decay at the beginning of the domain.
Therefore, parameter $\tau$ is a powerful enhancement that allows algorithm {\sc E} to be tuned to return the most suitable solution for the domain under analysis.

For algorithm {\sc RE}, due to the randomness, we report the best solution with respect to \score out of $100$ runs.
We do the same for the baseline {\sc LS}, that we introduce in Section~\ref{sec:exp2}.
Figure~\ref{fig:evaluation_re} shows the boxplots of the edge-agreement ratio over the larger datasets.
It has significant values in all cases, above $0.9$, and is stable among the different executions.
Polarity and solution size for all datasets are reported in Figure~\ref{fig:evaluation}.
For such measures, we do not show boxplots as they are highly dependent on the specific dataset and very stable over different runs: their index of dispersion is lower than $0.01$ and $3.2e\!-\!05$, respectively, for all datasets.
This confirms that algorithm {\sc RE} is very stable and does not require multiple executions to identify high-quality solutions.

\subsection{Comparative evaluation}\label{sec:exp2}
We next compare algorithms {\sc E} and {\sc RE} against non-trivial baselines inspired by methods proposed in the literature for different yet related problems.

\spara{FOCG.}
The first method we compare to, whose objective is to find $k$ \emph{oppositive cohesive groups} (i.e., $k$-OCG) in signed networks, is taken from~\cite{chu2016finding}.
Algorithm {\sc FOCG} detects $p$ different $k$-OCG structures within the input signed network, among which we elect the one having highest polarity as the ultimate solution to our problem.
We setup the algorithm with the default configuration (i.e., $\alpha = 0.3$ and $\beta = 50$) and $k = 2$.
The code is provided by the authors.

\spara{Greedy.}
Our second baseline is inspired by the 2-approximation algorithm for \emph{densest subgraph}~\cite{charikar2000greedy}.
Algorithm {\sc Greedy} (for short {\sc G}), iteratively removes the vertex minimizing the difference between the number of positive adjacent edges and the number of negative adjacent edges, up to when the graph is empty.
At the end, it returns the subgraph having the highest polarity among all subgraphs visited during its execution.
The assignment of the vertices to the clusters is guided by the sign of the components of the eigenvector~$\vec{v}$, corresponding to the largest eigenvalue of $A$.

\spara{Bansal.}
A different approach, motivated by the strong similarity to our setting, is inspired by Bansal's 3-approximation algorithm for \twocc on complete signed graphs~\cite{bansal2004correlation}.
For each vertex $u \in V$,
this algorithm, which we refer to as {\sc Bansal} (for short {\sc B}), identifies $u$ together with the vertices sharing a positive edge with $u$ as one cluster, and the vertices sharing a negative edge as the other.
Of these $|V|$ possible solutions, it returns the one maximizing polarity.

\spara{LocalSearch.}
Finally, we consider a local search approach ({\sc LocalSearch}, for short {\sc LS}), guided by our objective function.
Algorithm {\sc LS} starts from a set of vertices chosen at random; at each iteration, it adds (removes) to (from) the current solution the vertex that maximizes the gain in terms of polarity, and finally terminates when the gain of moving any vertex is lower than $0.2$.
Also for this algorithm, the assignment of the vertices to the clusters is guided by the signs of $\vec{v}$.

\begin{figure}[t]
\centerline{
\includegraphics[width=1\columnwidth]{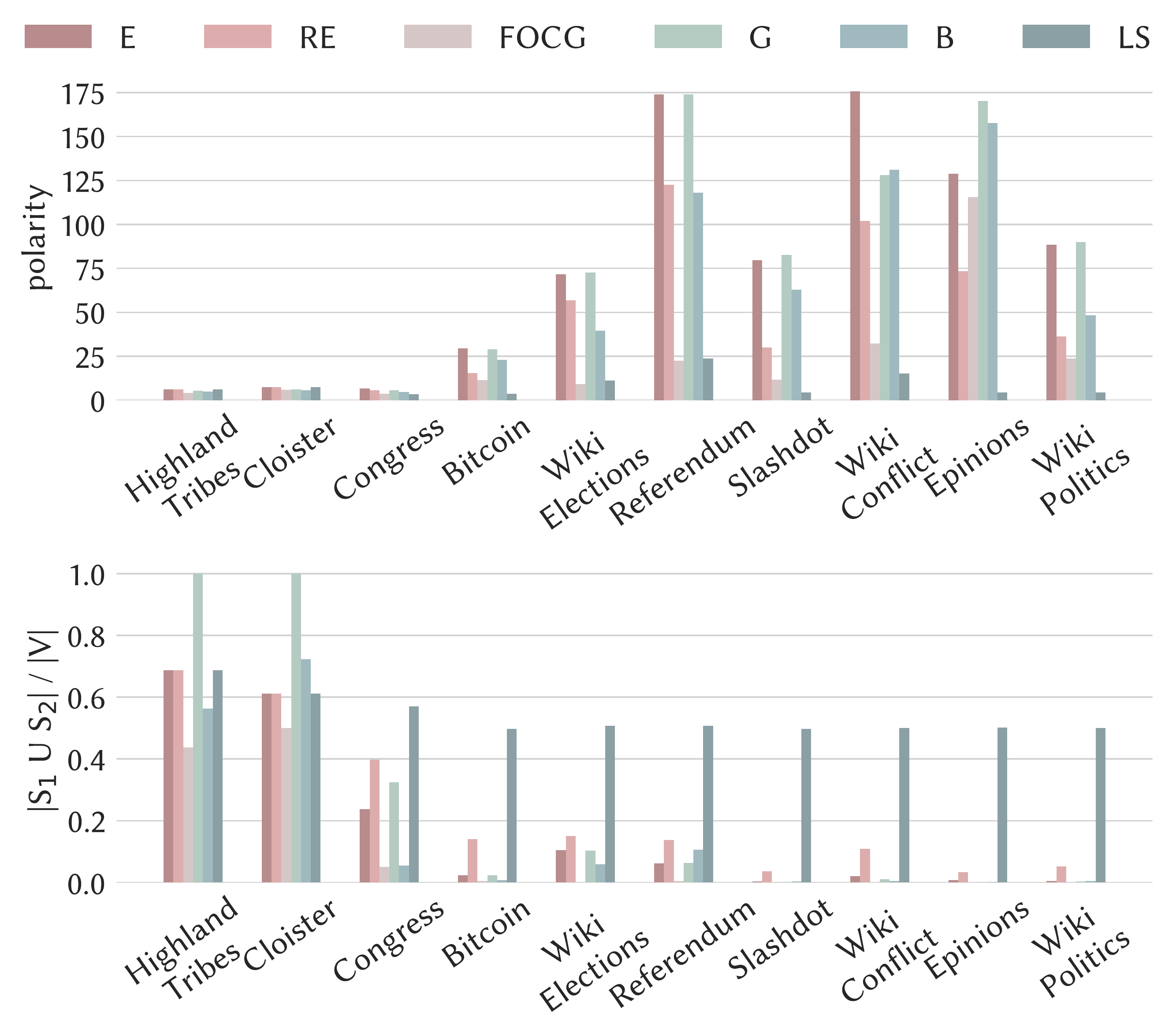}
}
\vspace{-4mm}
\caption{\label{fig:evaluation} Polarity and solution size (normalized) of the proposed algorithms and baselines.}
\end{figure}

Figure~\ref{fig:evaluation} reports the achieved values of  \score\ for all compared algorithms on all datasets, as well as the size (normalized by $|V|$) of the solutions returned.
%
%
%
In most of the cases, algorithm {\sc E} results the be the most competitive method with respect to \score; on the other hand, algorithm {\sc RE} is able to return solutions of high \score for the small-sized datasets.
Algorithm {\sc FOCG} is instead not competitive since its solutions are of extremely small size (note that the numerator of our objective can be up to quadratic in the size of the denominator, so size matters for reaching high \score).
Algorithm {\sc G} has, in general, \score comparable to algorithm {\sc E}, slightly higher in a few cases (with the exception of \textsf{WikiConflict}, in which algorithm {\sc E} clearly outperforms algorithm {\sc G}).
However, it must be noted that algorithm {\sc G} often returns a very dense subgraph as one of the two communities, leaving the second community totally empty, which is, of course, undesirable in our context.
Algorithms {\sc B} and {\sc LS}, instead, exhibit weak performance in terms of \score: their search spaces strongly depend on the neighborhood structure of the vertices (for {\sc B}), or on the random starting sets (for {\sc LS}).
About the solution size, all methods, with the exception of algorithms {\sc FOCG} and {\sc LS}, return solutions of reasonable dimension with respect to the number of vertices of the networks.
Excluding the small empirical datasets (i.e., \textsf{HighlandTribes}, \textsf{Cloister}, and \textsf{Congress}), the size of the solutions is below $20\%$ of the input.

\begin{figure}[t!]
\centerline{
\includegraphics[width=1\columnwidth]{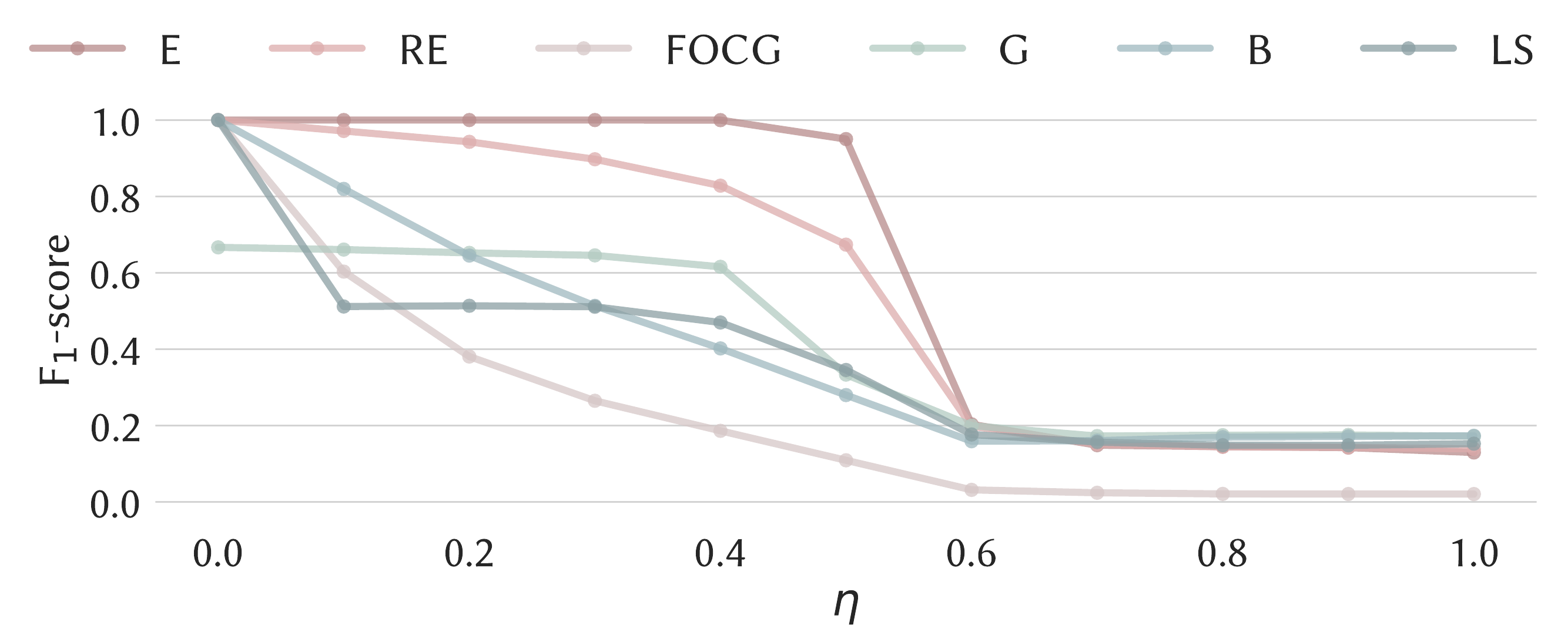}
}
\vspace{-4mm}
\caption{\label{fig:f1_eta} $F_1$-score as a function of the noise parameter $\eta$ ($n_c = 100$, $n_n = 800$).}

\centerline{
\includegraphics[width=1\columnwidth]{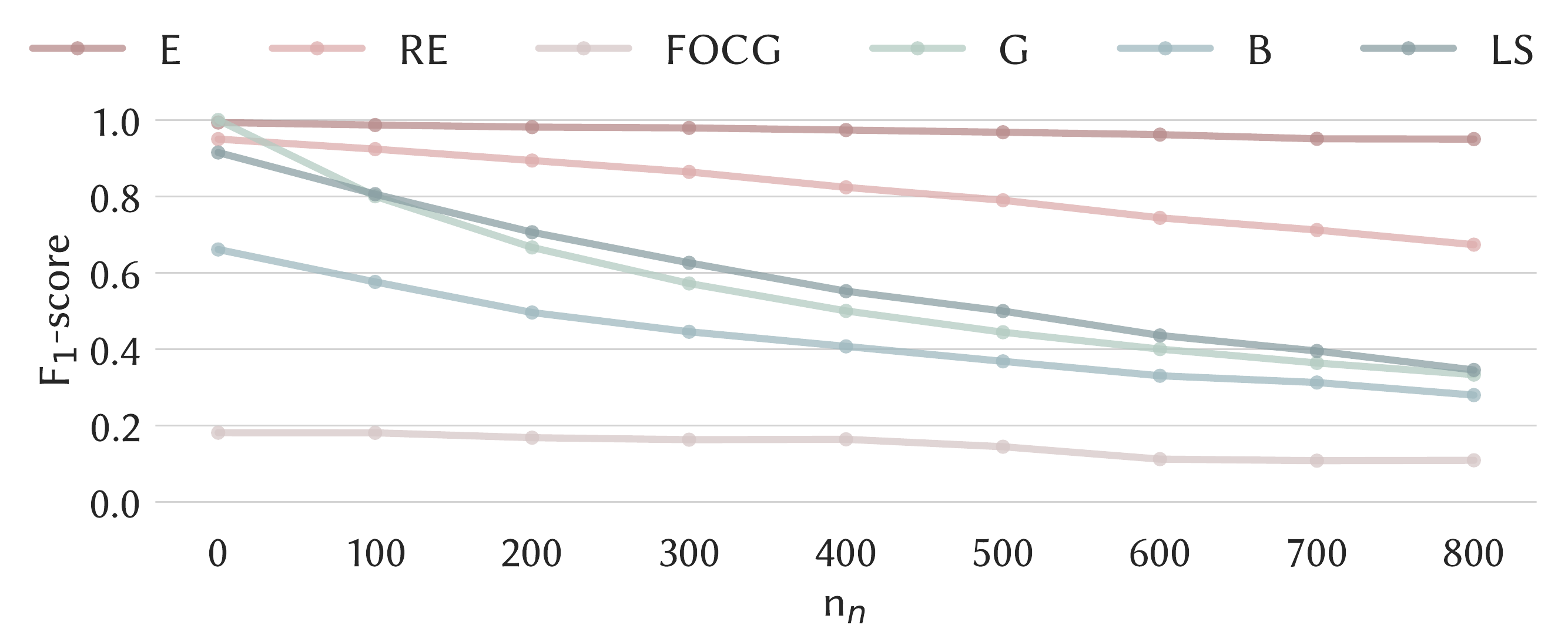}
}
\vspace{-4mm}
\caption{\label{fig:f1_vertices} $F_1$-score as a function of the number noisy vertices $n_n$ ($n_c = 100$, $\eta = 0.5$).}
\end{figure}

\spara{Planted polarized communities.}
In order to better assess the effectiveness of the various algorithms, we test their ability to detect a known planted solution, concealed within varying amounts of noise. For our purposes we create a collection of synthetic signed networks identified by three parameters: the size of each planted polarized community $n_c =  |S_1| = |S_2|$ (for convenience, we consider communities having the same size); the number of noisy vertices external to the two polarized communities $n_n = |V \setminus (S_1 \cup S_2)|$; and, a \emph{noise parameter} $\eta \in [0,1]$ governing the edge density and agreement to the model. In detail:
\squishlist
\item edges inside $S_1$ (respect. $S_2$) exist and are positive with probability $1 - \eta$, exist and are negative with probability $\eta/2$, and do not exist with probability $\eta/2$;
\item edges between  $S_1$ and  $S_2$ exist and are negative with probability $1 - \eta$, exist and are positive with probability $\eta/2$, and do not exist with probability $\eta/2$;
\item all other edges (outside the two polarized communities) exist with probability $\eta$ and have equal probability of being positive or negative.
\squishend

\noindent The higher $\eta$, the less internally dense and polarized the two communities are, and the more connected the noisy vertices are, both between themselves and to the communities. Observe how the case with no noise ($\eta = 0$) corresponds to the ``perfect'' structure.

For each configuration of the parameters, we create $10$ different networks and we report the average $F_1$-score in detecting which vertices belong to $S_1$ (respect. $S_2$) and which ones to $V \setminus (S_1 \cup S_2)$\footnote{For instance \emph{recall} is defined as $(|S_1^* \cap S_1| + |S_2^* \cap S_2|)/ |S_1 \cup S_2|$, where $S_1^*$ ($S_2^*$) denotes the first (second) community returned by the algorithm while $S_1$ ($S_2$) denotes the corresponding ground-truth one.}.

In Figure~\ref{fig:f1_eta} we fix the size of the synthetic network to $1\,000$ ($n_c = 100$, $n_n = 800$) and vary $\eta$.
For $\eta = 0$, all algorithms have, as expected, maximum $F_1$-score with the exception of algorithm {\sc G} that, even in the case without noise, is not able to exactly identify the planted structure.
As expected, as $\eta$ increases, the $F_1$-score decays for all methods; however, our algorithms {\sc E} and {\sc RE} clearly outperform the others.
Figure~\ref{fig:f1_vertices} shows the $F_1$-score varying the number $n_n$ of vertices external to the polarized communities, with fixed $n_c = 100$ and $\eta = 0.5$.
Again algorithms {\sc E} and {\sc RE} stand out, especially {\sc E} that presents $F_1$-score close to the maximum in all cases.
Algorithm {\sc FOCG} has the poorest performance: the small size of its solutions penalizes the recall, which is never greater than $0.1$.

\begin{figure}[t!]
\centerline{
\includegraphics[width=1\columnwidth]{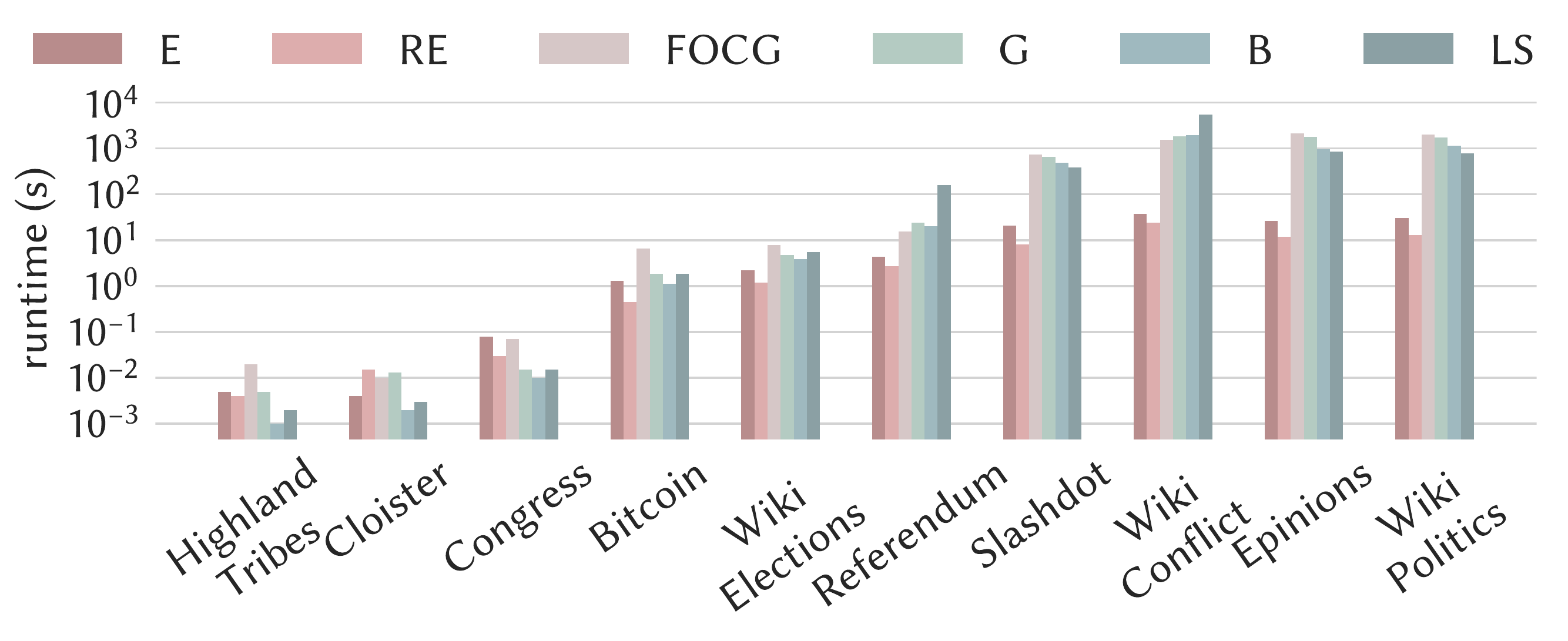}
}
\vspace{-4mm}
\caption{\label{fig:evaluation_runtime} Runtime of the proposed algorithms and baselines.}

\centerline{
\includegraphics[width=1\columnwidth]{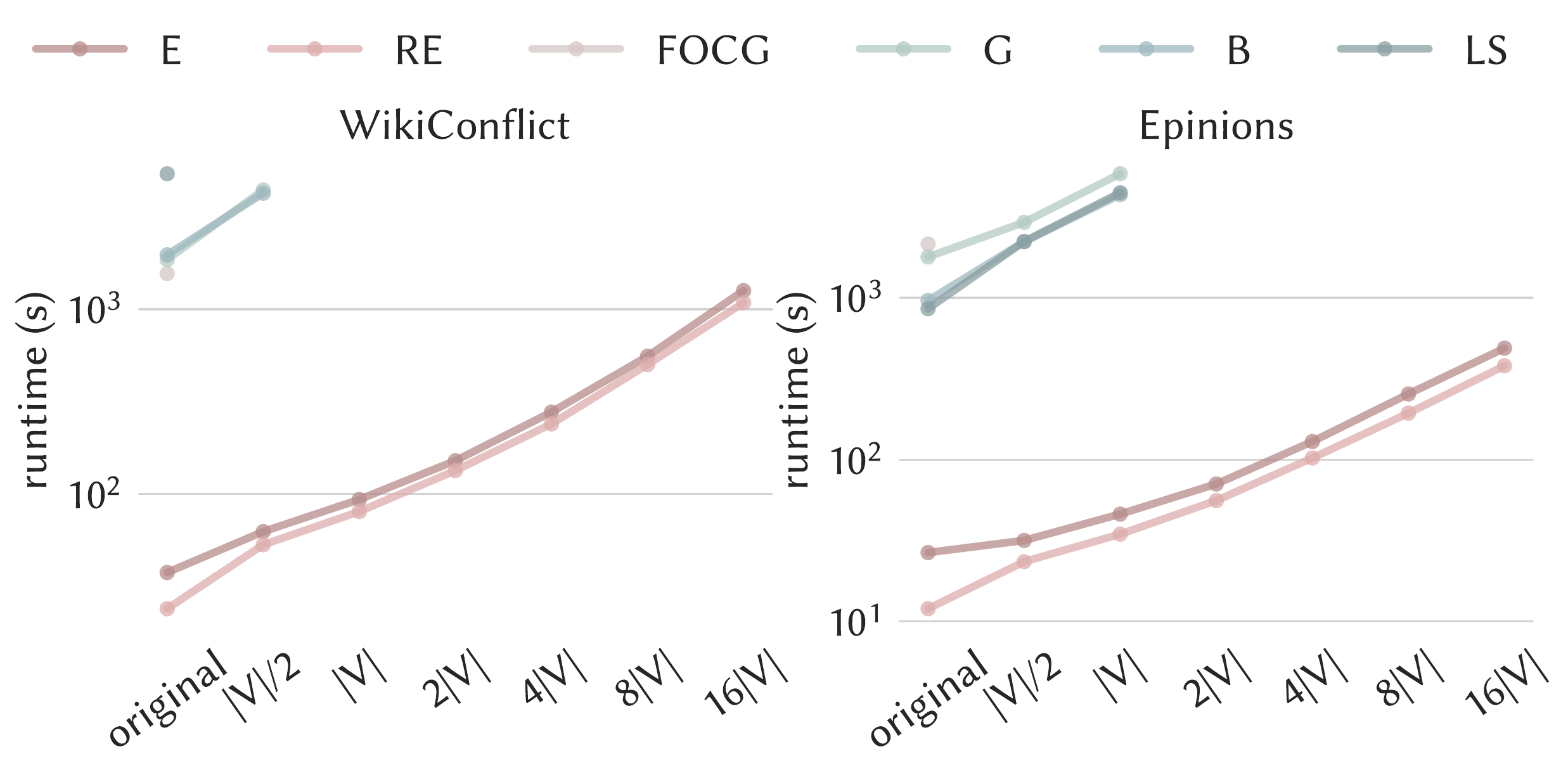}
}
\vspace{-4mm}
\caption{\label{fig:evaluation_scalability} Scalability: runtime of the proposed algorithms and baselines as a function of the number of injected dummy vertices, for \textsf{WikiConflict} and \textsf{Epinions}.}
\end{figure}

\spara{Runtime and scalability.}
Figure~\ref{fig:evaluation_runtime} reports the runtime of all algorithms over all datasets.
Algorithms {\sc E} and {\sc RE}, with their practical enhancements discussed in Section \ref{subsection:practical}, always terminate in less than $40$ seconds.
The runtime of the baselines is instead more than an order of magnitude higher than algorithms {\sc E} and {\sc RE}.

In order to assess the scalability of our methods, we augment two of the larger datasets (i.e., \textsf{WikiConflict} and \textsf{Epinions})
by artificially injecting dummy vertices having a number of randomly-connected edges equal to the average degree of the original network, while maintaining $\rho_-$ (i.e., the ratio of negative edges).
The largest datasets created in this way contain up to $2$\,M vertices and $67$\,M edges (see Table~\ref{tab:datasets} for details).
Note that, as the quantity of noise increases, $\delta$, i.e., the ratio of non-zero elements of the adjacency matrix, decreases.
Nonetheless, $\delta$ differs with respect to the original datasets less than an order of magnitude in both cases, making the following results about scalability significant.

Figure~\ref{fig:evaluation_scalability}, which reports on the $x$-axis the number of dummy vertices added, shows that the runtime of both algorithms {\sc E} and {\sc RE} grows linearly with the number of vertices.
Among the two, algorithm {\sc E} is slightly slower than algorithm {\sc RE} due to the evaluation of multiple values of the threshold $\tau$.
In the worst case, algorithms {\sc E} returns in about $21$ minutes.
On the other hand, the baselines cannot complete each computation within the $10\,000$ seconds timeout that we apply.
No baseline terminates for more than $|V|$ additional dummy vertices on both datasets.
In particular, algorithm {\sc FOCG} is able to handle in reasonable time only the original versions, with no dummy vertices.
It should be noted that algorithm {\sc FOCG} recursively finds a polarized structure, removes the corresponding subgraph, and repeats the process on the remaining vertices.
While each one of these iterations runs efficiently, most of the found structures are too small to be of interest in our setting.
Thus, it is necessary to allow the algorithm to complete many of such iterations in order to find interesting solutions.

\subsection{Case study: political debate}\label{sec:exp3}
We finally analyze the solution extracted by algorithm {\sc RE} from \textsf{Referendum} to show tangible benefits of our problem formulation and algorithms in identifying the two most polarized communities in a signed network modeling political debates.
The \textsf{Referendum} dataset includes Twitter data about the Italian Constitutional Referendum held on December 4, 2016 (more information about the Referendum can be found at this \href{https://en.wikipedia.org/wiki/2016_Italian_constitutional_referendum}{link}).
The original data seed consists of about $1$\,M tweets posted between November 24 and December 7, 2016, extended by collecting retweets, quotes, and replies.
The users ($10\,884$ in total) are annotated with a stance about their outlook towards the Referendum as favorable ($5\,137$), against ($1\,510$), or none ($4\,237$) when the stance cannot be inferred.
An interaction (edge) is considered negative if occurred between two users (vertices) of different stances, and is positive otherwise, i.e., we treat ``none'' users as neutral, in agreement with both favorable and against users.

 The solution output by algorithm {\sc RE} consists of two communities of $27$ and $1\,558$ users, accounting for 14\% of the overall user set.
Both communities have more than 99\% of positive edges within and 74\% of negative edges in-between, and thus, are highly polarized.
Interestingly, all the $27$ users of the smaller community are classified as favorable to the Referendum, while the users in the larger community as against (75\%) or ``none'' (24\%), with the exception of $3$ favorables.
Moreover, the vertices in the solution have, on average, $183.12$ adjacent edges compared to the average $22.85$ contacts of the vertices outside, meaning that the solution identifies the ``core'' of the controversies, i.e., a set of intensely debating users about the Referendum.
These results provide  evidence of the practical value of our problem formulation and algorithms to identify two communities that are polarized about a certain topic.

\section{Conclusions and Future Work}
\label{sec:conclusions}
Detecting extremely polarized communities might enable fine-grained analysis of controversy in social networks, as well as open the door to interventions aimed at reducing it \cite{garimella2017reducing}.
As a step in this direction, in this paper we introduce the \ourproblemfull\ problem,
which requires finding two communities (subsets of the network vertices)
such that within communities there are mostly positive edges
while across communities there are mostly negative edges.
We prove that the proposed problem is \NP-hard and devise two efficient algorithms with provable approximation guarantees. Through an extensive set of experiments on a wide variety of real-world networks, we show how the proposed objective function can be optimized to reveal polarized communities. Our experiments confirm that our algorithms are more accurate, faster, and more scalable than non-trivial baselines.

This work opens several enticing avenues for further inquiry. Some questions follow immediately from our theoretical results. What are the approximation capabilities of \eigensign in signed networks? Can we improve the $\sqrt n$ factor of \randalgo, e.g., by multiplying the probability vectors by $\|v\|_1$ or some other factor? Finally, it would be interesting to extend the problem to detect an arbitrary number of communities.

The application of the proposed algorithms to real-world networks with positive and negative relationships can have implications in computational social science problems. For instance, understanding opinion shifts in data streaming from social media sources can be investigated in terms of polarized communities. Opinions shared within vertices (individuals) belonging to the same community are likely to be reinforced after different interactions; discussions within individuals with antagonistic perspectives may result in both opinion shifts and controversy amplification. The identification of a subgraph made of vertices belonging to \ourproblemfull\ may lead to novel ways to discover the basic laws behind opinion shift dynamics. Thus, it would be interesting to study extensions of the \ourproblemfull\ problem in the setting of temporal networks.



\appendix

\section{Appendix}
\label{sec:appendix}

\spara{Hardness (Theorem~\ref{th:hard}).}
In this section, we refer to a solution of \ourproblem\ as $S_1,S_2$,
which denote the subsets of vertices that are assigned a $1$ or a $-1$, respectively,
in the solution vector $\vec{x}$. Given a vertex $v \in V$ and a subset of vertices $S\subseteq V$,
we use $d_+(v,S)$ (respectively $d_-(v,S)$)
to denote the number of `$+$' edges (respectively `$-$' edges) connecting $v$ to other vertex in~$S$.

We exploit the following result in our proof.
It can be easily verified by examining the behavior of the cost functions
when moving one vertex from one set to the other, so we omit the proof.
\begin{proposition}
\label{prop:prob_eq}
If we require $S_1\cup S_2=V$,
problem \twoccfull is equivalent to \twocc,
i.e.,  their optimal solutions are the same.
\end{proposition}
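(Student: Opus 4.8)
The plan is to show that, under the constraint $S_1\cup S_2=V$ (equivalently $S_0=\emptyset$), the objectives of \twoccfull and \twocc differ only by an additive constant that is independent of the chosen partition, so the two problems are maximized by exactly the same partitions. First I would fix an arbitrary partition $(S_1,S_2)$ of $V$ and introduce four counters: let $p_{\mathit{in}}$ and $n_{\mathit{in}}$ be the numbers of positive and negative edges whose endpoints lie in the same part (within $S_1$ or within $S_2$), and let $p_{\mathit{out}}$ and $n_{\mathit{out}}$ be the numbers of positive and negative edges with one endpoint in $S_1$ and the other in $S_2$. Being careful with the counting conventions --- the factor $\tfrac12$ on the within-part sums exactly compensates the double counting of the ordered pairs $(u,v)$ and $(v,u)$, whereas each cross edge is counted once because only $S_1\times S_2$ (not $S_2\times S_1$) appears in the cross sum --- I would rewrite the objectives as
\[
\cc(S_1,S_2)=p_{\mathit{in}}+n_{\mathit{out}},\qquad
\ccad(S_1,S_2)=(p_{\mathit{in}}-n_{\mathit{in}})+(n_{\mathit{out}}-p_{\mathit{out}}).
\]

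Next I would exploit the fact that, with no neutral vertices, every edge is classified as either within or across, so $p_{\mathit{in}}+p_{\mathit{out}}=|E_+|$ and $n_{\mathit{in}}+n_{\mathit{out}}=|E_-|$. Substituting $n_{\mathit{in}}=|E_-|-n_{\mathit{out}}$ and $p_{\mathit{out}}=|E_+|-p_{\mathit{in}}$ into $\ccad$ collapses it to
\[
\ccad(S_1,S_2)=2\,(p_{\mathit{in}}+n_{\mathit{out}})-\bigl(|E_+|+|E_-|\bigr)=2\,\cc(S_1,S_2)-|E_+\cup E_-|.
\]

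Since $|E_+\cup E_-|$ is a constant that does not depend on the partition, $\ccad$ is an increasing affine function of $\cc$ over the set of feasible partitions; hence the two objectives attain their maxima at exactly the same partitions, which establishes the claimed equivalence. I expect the only delicate point to be the bookkeeping of the $\tfrac12$ factors and of the ordered-pair conventions when translating the indicator-sum definitions into the four edge counters; once these counters are set up correctly, the remainder is a one-line substitution using the conservation identities for $|E_+|$ and $|E_-|$, and no further argument is needed.
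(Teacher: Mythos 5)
Your proof is correct. Note that the paper actually omits a proof of this proposition altogether, merely remarking that it ``can be easily verified by examining the behavior of the cost functions when moving one vertex from one set to the other'' --- i.e., it gestures at a local exchange argument. Your route is different and, for this particular statement, more natural: you establish the global affine identity
\[
\ccad(S_1,S_2) \;=\; 2\,\cc(S_1,S_2) \;-\; \bigl(|E_+|+|E_-|\bigr)
\]
valid for \emph{every} partition with $S_0=\emptyset$, from which the coincidence of maximizers is immediate. Your bookkeeping of the ordered-pair conventions is right: the within-cluster sums range over ordered pairs in $S_i\times S_i$, so the factor $\tfrac12$ cancels the double count, while each cross edge appears exactly once in the sum over $S_1\times S_2$; and the conservation identities $p_{\mathit{in}}+p_{\mathit{out}}=|E_+|$, $n_{\mathit{in}}+n_{\mathit{out}}=|E_-|$ hold precisely because the constraint $S_1\cup S_2=V$ leaves no edge unclassified. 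What your approach buys is a quantitative relation between the two objectives (not just equality of argmaxes), which is in fact what the appendix's hardness reduction implicitly uses when it writes $\wsqp$ in terms of $\wma$ minus the disagreement count $D$; the local-move argument the paper alludes to would only show that local optima agree and would need an extra step to conclude anything about global optima. No gaps.
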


We now prove that \ourproblem\ is \NP-hard by reduction from \twocc,
which has been shown to be \NP-hard by Shamir~\emph{et~al.}~\cite{shamir2004cluster}.

\begin{proof}[Proof of Theorem~\ref{th:hard}]
Given a graph $\tilde G=(\tilde V, \tilde E)$ with $n$ vertices as instance of \twocc,
we construct a graph $G=(V,E)$ to be an instance of \ourproblem\ as follows.
For every vertex in $\tilde V$ we create a corresponding vertex in $V$,
and for every edge in  $\tilde E$ we add an edge in $E$ between the corresponding vertices in $\tilde V$,
and having the same sign.
Furthermore, for every vertex $v$ in $\tilde V$ we introduce a clique of $m>3n$ vertices (and positive edges)
and a `$+$' edge between $v$ and every vertex in the clique.
The strategy to prove hardness is the following. We first restrict
ourselves to {\em complete} solutions of \ourproblem (i.e. $S_1\cup
S_2=V$), which can of course be mapped to solutions of \twocc.
We prove that if one such complete solution $S_1, S_2$ optimizes \ourproblem,
the corresponding solution of \twocc is also a maximizer.
Second, we show that any optimal solution of the constructed instance of \ourproblem\ is complete.

\enlargethispage{\baselineskip}

We denote the objective of the problems \twocc and \ourproblem,
on instances $\tilde G$ and $G$,
by $\wma$ and $\wsqp$, respectively.
We consider a solution $\tilde S_1, \tilde S_2$ of \twocc, and
a solution $S_1,S_2$ of \ourproblem,
such that $\tilde S_1 \subseteq S_1$ and $\tilde S_2 \subseteq S_2$.
Let us first restrict our attention to complete solutions of \ourproblem. Observe that
\begin{align}
\wsqp(S_1,S_2) = & \, \frac{1}{n+nm} \left( \wma(\tilde S_1, \tilde S_2) - D(\tilde S_1,\tilde S_2) \right) \nonumber
\\
& + \frac{1}{n+nm} \left( |S_1|m + |S_2|m + n {m\choose 2} \right), \nonumber
\end{align}
where $D(\tilde S_1,\tilde S_2)=\sum_{v \in \tilde S_1}d_+(v,\tilde S_2) + \sum_{v \in \tilde S_2}d_+(v,\tilde S_2) + d_-(v,\tilde S_1) + d_-(v,\tilde S_2)$,
that is, the sum of disagreements in the resulting clustering.
Note that $\wma(\tilde S_1, \tilde S_2) - D(\tilde S_1,\tilde S_2)$ is exactly the objective of the \twoccfull problem.
In other words, the obective of \ourproblem on $G$
is proportional to the objective of \twoccfull on $\tilde G$ plus a constant.
By Proposition \ref{prop:prob_eq}, the first part of the proof is complete.

We now consider a complete solution $S_1, S_2$ and show that removing vertices leads to no further improvement.
Suppose we remove a set $R$ of $r$ vertices from the solution. We want to show
\[
  \frac{\nu (\wsqp(S_1,S_2))-\Delta(R)}{n+nm-r} < \frac{\nu(\wsqp(S_1,S_2))}{n+nm},
\]
where $\nu (\wsqp(S_1,S_2 ))$ is the numerator of $\wsqp(S_1,S_2)$,
and $\Delta(R)$ is the net change after removing the vertices in $R$
(i.e., the number of agreements minus disagreements that are removed).
Equivalently, we want to show
$\Delta(R)(n+nm) > r\nu (\wsqp(S_1,S_2))$.
We first consider that the removed vertices are in $\tilde V$. Observe that
\[
  \Delta(R) \geq rm-{r\choose 2} - r(n-r),
\]
%
\[
    \nu(\wsqp(S_1,S_2 )) \leq {n\choose 2} + nm + n{m\choose 2}.
\]
This upper bound holds because the right hand side
simply counts all possible `$+$' and `$-$' edges,
the edges between each actual vertex and its clique,
and the edges within cliques.
It is therefore sufficient to show
\begin{align*}
rm - rn + r^2-{r\choose 2} >  r\frac{  {n\choose 2} + nm + n{m\choose 2} }{n+nm}.
\end{align*}
After some manipulations and relaxing the condition to remove the dependence on $r$, we arrive at the following sufficient condition:
\[
  \left (m-n\right )(n+nm) > {n\choose 2} + nm + n{m\choose 2},
\]
which holds for $m > 3n$. The case where the removed vertices are not in $\tilde V$ can be analyzed in the same manner.
We have shown that we can reduce an instance of \twocc to a polynomially-sized instance of \ourproblem.
\end{proof}


\spara{Tight example for \randalgo.}
We consider a complete graph where all edges are positive, except for one Hamiltonian cycle comprised of negative edges. Without loss of generality, we can order the vertices so that the adjacency matrix is
\[
A = \left(\begin{array}{cccccc}
  0 & -1 & 1 & \dots & 1 & -1
\\  -1 & 0 & -1 & 1 & \dots & 1
\\ & &\vdots & \vdots &&
\\  1 & 1 & \dots & -1 & 0 & -1
\\  -1 & 1 & \dots & 1 & -1 & 0  \dots
  \end{array}\right).
\]
That is, matrix $A$ is comprised entirely of ones, save for the subdiagonal and superdiagonal entries, which are -1, and $A_{n1}=A_{1n}=-1$. It is easy to see that a constant vector $\vec{v}$, i.e., satisfying $v_i=v_j$ is an eigenvector of eigenvalue $n-5$. Since $\sum_i\lambda_i^2(A)=\|A\|_F^2=n(n-1)$, the eigenvalue $n-5$ will be the largest if
\[
\frac{n(n-1)}{2} < (n-5)^2,
\]
which holds for $n>16$. Note that $\sqrt n\vec{v}$ is a feasible solution for \ourproblem.
We now show that \randalgo attains a value of $\bigTheta(\sqrt n)$.

We first rely on Equality (\ref{eq:expected_expansion}) to obtain the following:
\begin{align*}
   &\mathbb E \left [\frac{\vec{x}^TA\vec{x}}{\vec{x}^T\vec{x}} \right ] \nonumber
  \\ & = \sum_{k=1}^n \frac{1}{k} \sum_{i\neq j}A_{ij}s_{ij}Pr(x_i=1,x_j=1)Pr(\vec{x}^T\vec{x}=k|x_i=1,x_j=1) \nonumber
  \\ & = \sum_{k=1}^n \frac{1}{k} \sum_{i\neq j}A_{ij}v_iv_jPr(\vec{x}^T\vec{x}=k|x_i=1,x_j=1) \nonumber
\end{align*}

Now, observe that given $k$, $Pr(\vec{x}^T\vec{x}=k|x_h=1,x_l=1)$ is constant for all $i\neq j$. Thus, for arbitrary $h,l$,
\begin{align}
  \label{eq:expected_tight}
   \mathbb E \left [\frac{\vec{x}^TA\vec{x}}{\vec{x}^T\vec{x}} \right ] \nonumber
   &= \sum_{k=1}^n \frac{1}{k} Pr(\vec{x}^T\vec{x}=k|x_h=1,x_l=1) \sum_{i\neq j}A_{ij}v_iv_j
  \\ & = (n-5) \mathbb E\left [\left.\frac{1}{\vec{x}^T\vec{x}} \right|x_h=1,x_l=1\right].
\end{align}

Observe that when all entries of $\vec{v}$ are equal in absolute value, $\vec{x}^T\vec{x}$ is a binomial variable with parameters $(n,|v_i|)=(n,1/\sqrt n)$. Thus, by Jensen's inequality we have
\[\mathbb E\left [\left.\frac{1}{\vec{x}^T\vec{x}} \right|x_h=1,x_l=1\right] \geq \frac{1}{\mathbb E\left [\vec{x}^T\vec{x} |x_h=1,x_l=1\right]} = \bigOmega(1/\sqrt n).
\]
Furthermore, it is known \cite{cribari2000note} that
\[\mathbb E\left [\left.\frac{1}{\vec{x}^T\vec{x}} \right|x_h=1,x_l=1\right] = \bigO(1/\sqrt n).
\]
That is,
\[
\mathbb E\left [\left.\frac{1}{\vec{x}^T\vec{x}} \right|x_h=1,x_l=1\right] = \bigTheta(1/\sqrt n).
\]

Combining this with Equality (\ref{eq:expected_tight}) we get
\[
\mathbb E \left [\frac{\vec{x}^TA\vec{x}}{\vec{x}^T\vec{x}} \right ] = \bigTheta(\sqrt n) = \bigTheta\left(\frac{OPT}{\sqrt n}\right).
\]


\begin{mdframed}[innerbottommargin=3pt,innertopmargin=3pt,innerleftmargin=6pt,innerrightmargin=6pt,backgroundcolor=gray!10,roundcorner=10pt]
\small
\spara{Acknowledgments.} Francesco Bonchi acknowledges support from Intesa Sanpaolo Innovation Center.
Aristides Gionis and Bruno Ordozgoiti were supported by three Academy of Finland projects (286211, 313927, and 317085), and the EC H2020RIA project ``SoBigData'' (654024).
\end{mdframed}

\end{document}